\def\BibTeX{{\rm B\kern-.05em{\sc i\kern-.025em b}\kern-.08em
    T\kern-.1667em\lower.7ex\hbox{E}\kern-.125emX}}
\newtheorem{theorem}{Theorem}
\newtheorem{lemma}{Lemma}
\newtheorem{prop}{Claim}
\newtheorem{proposition}{Proposition}
\begin{document}

\title{A Time-Bound Signature Scheme for Blockchains}
\author{
\IEEEauthorblockN{Benjamin Marsh\IEEEauthorrefmark{1}\IEEEauthorrefmark{2}, Paolo Serafino\IEEEauthorrefmark{2}}
\IEEEauthorblockA{\IEEEauthorrefmark{1}Sei Labs\\ben@seinetwork.io}
\IEEEauthorblockA{\IEEEauthorrefmark{2}University of Portsmouth, UK\\paolo.serafino@port.ac.uk}
}
\maketitle

\begin{abstract}
We introduce a modified Schnorr signature scheme to allow for time-bound signatures for transaction fee auction bidding and smart contract purposes in a blockchain context, ensuring an honest producer can only validate a signature before a given block height. The immutable blockchain is used as a source of universal time for the signature scheme. We show the use of such a signature scheme leads to lower MEV revenue for builders.
We then apply our time-bound signatures to Ethereum's EIP-1559 and show how it can be used to mitigate the effect of MEV on predicted equilibrium strategies.
\end{abstract}

\begin{IEEEkeywords}
Cryptography, Blockchain, Cryptocurrency, Digital signatures, Mechanism design, Game Theory
\end{IEEEkeywords}

\section{Introduction}
\label{sec:intro}

Blockchains are append‑only ledgers in which a transaction becomes
\emph{final} only after its digital signature is included in a block.
Among the many signature schemes used, Schnorr stands out for its
compact proofs and linearity, and is already co‑deployed with ECDSA on
major networks.  Before a signed transaction reaches the chain,
however, it must win a place in the next block, a process usually
modelled as an auction whose game‑theoretic properties are well
studied~\cite{fdn,tfmd}. In this auction the bidders’ strategy space
is minimal: once broadcast she can only raise the fee.  This asymmetry
exposes them to \emph{maximal extractable value} (MEV): producers can front‑run, back‑run, or sandwich the transaction and
extract additional revenue, directly taxing the bidder~\cite{10.1145/3689931.3694911}.

Ethereum’s EIP‑1559 fee mechanism was expected to push tips towards zero in a steady state, yet empirical data show that
non‑negligible tips and MEV persist. The root cause is a subtle
liveness property: a bidder’s signature, once broadcast, \emph{never expires}. A rational block producer can therefore defer inclusion indefinitely, waiting for a block in which that transaction yields a higher MEV return.

Existing mitigations, commit–reveal, threshold MPC, or trusted relay
networks, either add infrastructure friction or rely on residual trust
assumptions~\cite{cryptoeprint:2019/265,cryptoeprint:2024/1533,flashprotect}.
This paper introduces \emph{time‑bound Schnorr signatures} (\mbox{TB-Sig}). By hashing an explicit expiry height into the Fiat–Shamir challenge, TB‑Sig renders a transaction invalid once the blockchain’s tip exceeds that height, giving bidders temporal control
without touching consensus rules. Because Schnorr already co‑exists
with ECDSA, TB‑Sig can be deployed as an \emph{opt‑in} upgrade that
leaves wallet key material unchanged.

Our contributions are threefold:
\begin{enumerate}
  \item Protocol: a drop‑in Schnorr variant whose signature
        tuple simply appends the expiry height.
  \item Security: proofs of existential unforgeability under
        chosen message attack in the algebraic group model, matching
        vanilla Schnorr guarantees.
  \item Economic impact: a Stackelberg analysis of EIP‑1559
        showing that TB‑Sig collapses a producer’s incentive to delay
        inclusion, and restores the bidder's incentive to bid a negligible tip.
\end{enumerate}

\subsection{Justification}
Schnorr over secp256k1 is a manageable addition to a blockchain such as Ethereum, which uses ECDSA, as has been shown by Bitcoin which supports both \cite{antonopoulos2023mastering}. Since both signature schemes use the same key material, existing wallets and hardware can adopt it with minimal effort, and none on the bidder's part. Schnorr's simple $\Sigma$-protocol proof provides a simpler security proof and small signatures, which in turn allows for trivial verification on-chain using signature size without the need for additional metadata. 
\subsection{This Work}
In this work, we propose a modification to an existing signature scheme that allows for a bidder to issue a transaction with a defined upper bound, measured in block height, for a transaction to be included in a block, allowing for revaluation of the transaction at a later time if it is not included, as well as reducing the strategy space of a non-myopic strategic block producer. The block producer must now choose between the aforementioned attack and including transactions that may not be valid in the future. The signature scheme described here relies on being verified in the same way as a standard signature for a transaction would be, and is secure under the bounds of the chain's security. Notably, the signature scheme could be optional since the signature is now represented by a tuple of 3 parts, as opposed to the original 2 parts, allowing for a dynamic decision about the signature type by the verifier. In order for the signature scheme to be used, a public blockchain is leveraged as a source of time to submit the transaction. The bidder will select a desired expiry time, as a block height, $t_e$, which must be greater than the current time, $t_e \geq t_c$ and both the value $t_e$ and a boolean value based on the evaluation of $t_c \leq t_e$ are included in the Schnorr challenge to allow for a time-based commitment meaning the signature will not validate after $t_e$. We provide a formal definition of such a signature scheme, show the inclusion of $t_e$ is secure if the underlying hash function is secure, and provide a security proof for the non-threshold version of the scheme.

We then apply our TB-Sig to the context of Ethereum's EIP-1559.
We start from the empirical observation that the predicted equilibrium behavior when EIP-1559 is in a ``steady state'' is rarely observed in real life.
To account for this, we develop a Stackelberg \cite{fudenberg1991game} model that takes into account MEV, and show that a \textit{minimum tip} is required for immediate inclusion, as otherwise a miner would be better off delaying inclusion with the aim of collecting a higher MEV from including the transaction in subsequent blocks. A Stackelberg model is chosen to allow the leader-follower separation that trivially follows from a blockchain setting where transactions are signed and added to a mempool prior to MEV searches, and inclusion in a block.
Finally, we show that introducing TB-Sig can alleviate this issue and restore the desired equilibrium strategy.
\subsection{MEV}
\subsubsection{MEV definition}
MEV, first defined by \cite{9152675}, represents the extra revenue from a transaction that a builder, or block producer, is able to extract through the manipulation of that transaction's relative order in a block through processes such as \textit{front-running}, \textit{back-running}, and \textit{sandwich attacks} \cite{10179346}. Other forms of MEV exist such as bribery but are not relevant to this work. Front-running involves the placing of an attacking transaction before the bidder's transaction in a block, be that directly before or generally earlier in the block. Back-running involves the placing of the attacking transaction behind the bidder's transaction. Sandwich attacks involve placing a transaction both before and after the bidder's transaction.
\subsubsection{MEV revenue extraction}
MEV is NP-hard even to approximate \cite{zhang2024computation}, so MEV block builders have to use heuristics and settle for suboptimal MEV extraction.
MEV bids increase monotonically over time \cite{burian2024futuremev}, as builders have longer to search for better blocks. As such, it is trivial to rationalize that the expected MEV revenue, when a time-bound signature scheme is used, is reduced as the block builder is forced to include a transaction in a sub-optimal block or risk losing the revenue from that transaction altogether. In blockchains with fast block times such as Sei, the builder may have the incentive to withhold or censor a transaction.

\subsection{Blockchain Clock}
In this work, we use the blockchain as a common source of time to provide an ordering for events, given it is accessible by all agents in the system either directly through their node or through a wallet. The block height of the tip of the chain is able to provide us with a timestamp which ticks when the next block is received. The blockchain is used in this work as a traditional Lamport Clock\cite{lamport}, providing event ordering, monotonicity, and a clock without the need to synchronize physical clocks, which the blockchain is able to do by using a shared event when a new block is added to the chain without an explicit need to synchronize time, as all nodes can refer to the event. The blockchain provides a stronger ordering bound than the partial ordering required by a Lamport Clock, as it is able to provide total ordering. Ethereum has an average block production time of 12 seconds, so in the case of Ethereum, the blockchain clock ticks once every 12 seconds. The use of the blockchain as a clock dates back to Bitcoin\cite{btc} where the \textit{nLockTime} OP code is used to prevent the inclusion of a transaction until a defined block height. The blockchain clock is secured by the underlying economic security\cite{abadi2018blockchain} of the chain and can be assumed to be secure if the chain itself is secure against attacks via the rewards and punishments of the consensus protocol. As a result, the blockchain clock can be trusted to provide a reliable and tamper-resistant source of relative temporal information. The use of a universal clock may lead to stale views of the chain in some circumstances meaning the transaction is submitted having an already expired signature, such behavior is acceptable allowing the user to resubmit with a true valuation in a state of complete information. A subtlety arises in the presence of short reorgs. Suppose a transaction
with expiry $t_e = n$ is included in a block at slot $n$, but this block
is later orphaned in slot $n{+}1$. By the time the competing block at slot
$n{+}1$ becomes canonical, the transaction has expired and can no longer
be included.
If inclusion remains desirable, the user then needs to reissue the transaction with a fresh
signature. This behavior is consistent with
our design intent: the expiry bound is conservative, and resubmission simply
re-elicits the user’s true valuation under more up-to-date information.
Users who wish to guard against short reorgs can set $t_e$ a few blocks ahead
($t_e \geq t_c + k$ for small $k$), thereby trading tighter expiry against
resilience to transient forks.
\subsection{Related Work}
\subsubsection{Cryptography}
Recent advancements in Schnorr signatures have primarily focused on threshold signatures, such as FROST \cite{frost}, which have introduced novel modifications and cryptographic techniques widely accepted in the blockchain industry. These advancements have paved the way for new cryptographic methods and implementations in secure transactions. Zero-knowledge cryptography and non-interactive proofs, as explored in \cite{shortlived}, propose short-lived proofs and signatures that could be used to construct a protocol similar to the one we propose, relying on a verifiable delay function. Their approach offers a more general framework, whereas our method is simpler to implement and verify within a context where a blockchain clock is viable. Time-bound cryptography has also been explored extensively, starting with Rivest's work \cite{timelock} on time-lock puzzles, which introduced timed-release cryptography. This work aimed at ensuring that a signature could not be verified \textit{until} a specific time had passed, which contrasts with our proposal that a signature is only valid \textit{up to} a specific time. More recent studies, such as \cite{tide}, have continued to develop these concepts, exploring various mechanisms for implementing time-bound signatures in different cryptographic contexts. Our proposed modification to the Schnorr signature scheme leverages these advancements to introduce a time-bound element that allows for a transaction to be re-evaluated if not included within a specified time frame, addressing strategic vulnerabilities in blockchain transaction inclusion.
\subsubsection{MEV Countermeasures}
Numerous MEV countermeasures have been proposed including the use of cryptographic protocols such as commit-reveal schemes \cite{cryptoeprint:2019/265} and multi-party computation \cite{cryptoeprint:2024/1533} yet such protocols have limited usage outside of academia due to the overhead of such protocols. Fair ordering based on arrival time \cite{cryptoeprint:2020/269}, random ordering \cite{kavousi2023blindperm}, and batch auctions \cite{budish2015high} have also been proposed in the literature with nothing but niche industry acceptance. Centralized side-protocol approaches such as Flashbots protect \cite{flashprotect} also exist where a transaction is routed through a private RPC end-point and mempool, but they introduce a risk of failure or censorship due to their centralized nature. Protect also strictly enforces limits and drops zero-tip transactions, complicating wallet integration.
\subsection{Organization of the Paper}
The remainder of the paper is organized as follows. In Section \ref{sec:background} we give some background on Schnorr signature schemes. In Section \ref{sec:TB‑Sig} we introduce our TB‑Sig, whereas in Section \ref{sec:security} we prove its security guarantees. In Section \ref{sec:stack_model} we propose a Stackelberg model for Ethereum's EIP-1559 that takes into account MEV, analyze its equilibrium strategies, 
and show how TB‑Sig can counteract MEV.
Finally, in Section \ref{sec:conclusions} we draw some conclusions.

\section{Background}\label{sec:background}
\subsection{Notation}
In line with cryptography literature, we write  \(x \overset{{\scriptscriptstyle \mathsf{\$}}}{\leftarrow} S\) to denote that \(x\) is sampled uniformly at random from the finite set \(S\). By $\mathbb{Z}_q$ we denote the finite field of integers modulo a prime $q$, whereas \(\mathbb{G}\) denotes a cyclic group of prime order \(q\) (with generator \(g\)), in which the discrete logarithm problem is assumed hard. 
A \textit{random oracle} (\(\mathsf{RO}\)) is an ideal hash oracle \(H(\cdot)\) that returns fresh uniform outputs on the first query and is consistent thereafter.
We use the notation $H(a,b)$ to denote the hash of the concatenation of $a$ and $b$. 
We use the term \textit{producer} for an agent who creates a block, sometimes called a miner, validator, builder, or proposer in other works.
We use \textit{bidder} to denote the transaction issuing agent bidding in the inclusion auction.
We will assess the security of our proposed TB-Sig against an adversary $\mathcal{A}$.
Unless otherwise stated, $\mathcal{A}$ is assumed to be a \textit{probabilistic polynomial time Turing machine} (PPT) with random tape $\omega$.
\subsection{The Schnorr Signature Scheme}
A Schnorr signature\cite{Schnorr} is generated for a message $m$, under a secret key $s \in \mathbb{Z}_q$ and a public key $Y = g^s \in \mathbb{G}$ by the following protocol, assuming the existence of a cryptographically secure hash function $H$:
\begin{enumerate}
    \item Sample a random nonce, $k \overset{{\scriptscriptstyle \mathsf{\$}}}{\leftarrow} \mathbb{Z}_q$ and compute the commitment $R = g^k \in \mathbb{G}$
    \item Compute the challenge $c = H(R, Y, m)$
    \item Using the secret key, $s$, compute the response $z = k + s \cdot c \in \mathbb{Z}_q$
    \item The signature of $m$ is thus defined as $\sigma = (R, z)$ 
\end{enumerate}
To verify the signature $\sigma$ of a message $m$ with a public key $Y$ we use the following protocol:
\begin{enumerate}
    \item Given $\sigma = (R, z)$, compute $c = H(R, Y, m)$
    \item Compute $R' = g^z \cdot Y^{-c}$
    \item Output 1 if $R = R'$ to indicate success, else output 0.
\end{enumerate}
Schnorr signatures are simply the standard $\Sigma$-protocol proof of knowledge of the discrete logarithm of \( Y \), made non-interactive (and bound to the message \( m \)) with the Fiat-Shamir transform \cite{FS87}, and can be reduced to the discrete logarithm problem in the random oracle model according to \cite{PS96}.

\section{Time-bound signatures}\label{sec:TB‑Sig}
We now present a time-bound signature scheme that allows for time-bound agent interactions with blockchain-centric protocols and auctions, such as transaction fee mechanisms. For the time-bound signature scheme to work, a blockchain clock must be available to all participants, whether they are signing or verifying a signature. As such, our time-bound signature scheme relies on the block height of a public blockchain. In order to account for time, the signature scheme includes the output of a time-checking function in the challenge, relying on the avalanche effect to ensure the signature will not be verified when the signature is required to no longer be valid. With the exception of two new fields in the Schnorr commitment and the requirement for a blockchain clock, the signature scheme is otherwise unchanged in its usage, and the overhead from the modifications is negligibly cheap both computationally and from a storage perspective.
\subsection{Time check function}\label{tchk}
The time check function $f_t$ takes two inputs: the current block height $t_c$ and the desired end block height $t_e$. During the signing process, it is assumed that the current block height is strictly less than the desired end block height, i.e., $t_c < t_e$. This condition should be enforced by end-user software. The end validity height is included as part of the new signature $\sigma = (R, z, t_e)$ to allow a block producer to ensure a transaction is valid for the desired block of inclusion, or for mempools to handle expired transactions. When a signature is verified, the current block height can be taken from the height of the block being produced, or that is being verified.
The time check function is defined as follows:
\begin{equation}
f_t(t_c, t_e) = \begin{cases}
1 & \text{if } t_c \leq t_e \\
0 & \text{otherwise}
\end{cases} \label{eq:tfunc}
\end{equation}
\subsection{Signing}
The signing process for the time-bound signature scheme closely mirrors the traditional Schnorr signature scheme, with the addition of a pair of extra terms to the challenge. The inclusion of the $t_e$ term ensures a cryptographic commitment to the end value exists in such a way that a block producer cannot trivially spoof a different end time. The time check function term allows for the verification of the current time against the desired time. Assuming $H$ is the cryptographically secure hash function SHA256, the signing process is as follows:
\begin{enumerate}
    \item Sample a random nonce, $k \overset{{\scriptscriptstyle \mathsf{\$}}}{\leftarrow} \mathbb{Z}_q$ and compute the commitment $R = g^k \in \mathbb{G}$
    \item Compute the challenge $c = H(R, Y, m, t_e, f_t(t_c, t_e))$
    \item Using the secret key, $s$, compute the response $z = k + s \cdot c \in \mathbb{Z}_q$
    \item The signature of $m$ is thus defined as $\sigma = (R, z, t_e)$ 
\end{enumerate}
\subsection{Verification}
Verification also closely mirrors the original Schnorr scheme. 
To verify the signature $\sigma=(R, z, t_e)$ of a message $m$ with a public key $Y$, use the following protocol:
\begin{enumerate}
    \item Compute ${c = H(R, Y, m, t_e, f_t(t_c, t_e))}$
    \item Compute $R' = g^z \cdot Y^{-c}$
    \item Output 1 if $R = R'$ to indicate success, else output 0.
\end{enumerate}
The verification process will fail at the $R = R'$ check if: (\textit{i}) the current block height is higher than the end time, i.e., $t_c > t_e$; or (\textit{ii}) if the producer checks a different value for $t_e$.
Introducing TB-Sig requires the following minimal modifications to the consensus protocol.
\subsubsection{Block producer}
When a block producer is producing a block, they must now ensure the transaction is valid for inclusion; otherwise, the block will be rejected by the network consensus rules. This can be done trivially cheaply by checking that $f_t(t_c, t_e) = 1$. 

\subsubsection{Block validation}
When a block is received by the network either during the normal block production process or during syncing, each block will be validated. To validate the signatures for a given block, any time-bound signature should be compared to the height of the block in which it is included.

\section{Security}\label{sec:security}
\subsection{Inclusion of the time function}
\begin{prop} 
\textit{The inclusion of $f_t$ does not weaken the challenge.}
\end{prop}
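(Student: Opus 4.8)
The plan is to read ``does not weaken the challenge'' as the statement that, in the random oracle model, the modified challenge $c = H(R,Y,m,t_e,f_t(t_c,t_e))$ retains exactly the two properties of $c = H(R,Y,m)$ that vanilla Schnorr's soundness proof uses: (i) for a query point not previously asked, $c$ is a fresh uniform element of the oracle's range, independent of the adversary's view; and (ii) over $q_H$ oracle queries the probability that two distinct honest sessions (or an honest session and a forgery attempt) collide on the \emph{input} to $H$ is at most $O(q_H^2/q)$, driven by the min-entropy of $R = g^k$ with $k \overset{{\scriptscriptstyle \mathsf{\$}}}{\leftarrow}\mathbb{Z}_q$. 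If both survive, the $\Sigma$-protocol special soundness / Fiat--Shamir argument goes through verbatim, so no security is lost.

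First I would fix the random-oracle game and isolate the only structural change: the domain of $H$ gains two coordinates, $t_e$ (carried in the signature) and the bit $b := f_t(t_c,t_e)\in\{0,1\}$ (recomputed by the verifier from public data). The key observation is that for any fixed suffix $(t_e,b)$ the map $(R,Y,m)\mapsto(R,Y,m,t_e,b)$ is injective, and the images of distinct suffixes are disjoint; hence the restriction of the modified oracle to any one slice is itself a random oracle, and the union over all slices used in an execution is still a random oracle. Consequently, given an adversary $\mathcal{A}$ against TB-Sig I can build $\mathcal{B}$ against plain Schnorr that forwards queries after stripping/appending the $(t_e,b)$ coordinates: fresh queries map to fresh queries, repeats map to repeats, so $\mathcal{B}$ simulates $\mathcal{A}$'s oracle perfectly and $\mathrm{Adv}(\mathcal{A}) \le \mathrm{Adv}(\mathcal{B}) + \varepsilon$, where $\varepsilon$ is only the collision term already present in (ii). This is the bulk of the argument.

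Next I would connect this to the two failure modes the verification section flags --- $t_c > t_e$ (so the verifier uses $b=0$ while the signer committed to $b=1$) and a producer substituting a different $t_e'$. In both cases the recomputed input to $H$ differs from the committed one, so by property (i) the challenge used at verification is uniform and independent of the committed challenge, and the check $R = g^z Y^{-c}$ passes with probability only $1/q$. I would stress that this reuses uniformity of the same random oracle rather than introducing any new assumption, so it is a designed feature, not a weakening; if anything the challenge now binds strictly more public data.

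The main obstacle I expect is not a calculation but pinning down the formal meaning of ``weaken'' so that the highly non-injective nature of $f_t$ --- every $t_c \le t_e$ collapses to the single bit $1$ --- is visibly harmless. The point to make carefully is that the challenge's unpredictability is supplied by $R$'s entropy, never by the new coordinates, and that signer and verifier agree on $b$ exactly on the intended validity window, so the collapsing of $t_c$ values is precisely what makes verification consistent for legitimately-timed inclusions. A secondary subtlety is keeping the reduction tight across many signing queries with different $t_e$; I would dispatch this by the disjoint-slices remark above, so the composite oracle remains a single random oracle and no extra loss is incurred.
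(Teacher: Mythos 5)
Your argument is sound and in fact considerably more formal than what the paper offers: the paper's own proof is a short sketch that treats $f_t$ as a single deterministic, predictable bit, appeals to pre-image/second pre-image (collision) resistance of $H$, and asserts that the challenge's entropy is still dominated by $(R,Y,m,t_e)$, deferring any real reduction to the later forking-lemma/AGM theorem. You instead work in the random oracle model, observe that fixing the suffix $(t_e,b)$ carves the modified oracle into disjoint slices each of which is itself a random oracle, and conclude that the challenge's uniformity and the $O(q_H^2/q)$ input-collision bound driven by the entropy of $R$ survive unchanged; you also explicitly connect the two verification failure modes ($t_c>t_e$, or a substituted $t_e'$) to fresh independent challenges, which the paper only handles in its second claim. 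That buys a sharper statement (a concrete simulation/reduction rather than an entropy heuristic) at the cost of committing to the ROM, whereas the paper's sketch is assumption-light and leaves the heavy lifting to its Lemma~1 and Theorem~1.

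One caution on your reduction $\mathcal{B}$: the wrapper must treat $(t_e,b)$ as part of the \emph{signed message} (i.e., run plain Schnorr on the injectively encoded message $m\|t_e\|b$), not strip these coordinates when forwarding to a plain-Schnorr oracle on $(R,Y,m)$. If you strip, two distinct TB-Sig queries $(R,Y,m,t_e,1)$ and $(R,Y,m,t_e',1)$ collapse to the same plain query and receive identical challenges, breaking perfect simulation, and a forgery on a fresh pair $(m^\star,t_e^\star)$ may reuse a message $m^\star$ that $\mathcal{B}$ already submitted to its signing oracle for a different $t_e'$, so freshness in the plain EUF-CMA game is lost. With the appending/encoding reading your disjoint-slices remark does exactly the right work and the reduction is tight; it is worth stating that choice explicitly.
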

\begin{proof}[Sketch]
The time check function as defined in \eqref{eq:tfunc} is a deterministic and easily predictable bit, its inclusion
in the Fiat-Shamir challenge does not reduce security. Formally, the
challenge is
$
c = H(R,Y,m,t_e,f_t(t_c,t_e)).
$

Security relies on the standard properties of the hash function $H$, as defined in \cite{boneh2020graduate}:
\begin{itemize}
    \item \textbf{Pre-image resistance:} given $h$, finding
    $(R,Y,m,t_e,f_t)$ such that $H(R,Y,m,t_e,f_t)=h$ is infeasible.
    \item \textbf{Second pre-image resistance:} given
    $(R,Y,m,t_e,f_t)$, finding
    $(R',Y',m',t'_e,f'_t)\neq(R,Y,m,t_e,f_t)$ with the same hash output
    is infeasible.
\end{itemize}
Since $f_t$ contributes only a single bit, the adversary gains no advantage:
the entropy of the challenge is still dominated by $(R,Y,m,t_e)$. The value
of $f_t$ merely hard-codes into the signature whether the expiry condition
was satisfied at verification. Thus the scheme’s security reduces to the
collision resistance of $H$, and Schnorr’s EUF–CMA guarantees are preserved.
\renewcommand{\qedsymbol}{}
\qedhere
\end{proof}
\subsection{Commitment to the desired final validity height}
\begin{prop}
\textit{The inclusion of $t_e$ in the Schnorr challenge is required as a commitment to the required signature time-bound.}
\end{prop}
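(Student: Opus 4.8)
The plan is to read the claim as a \emph{necessity} statement and establish it by exhibiting what breaks when $t_e$ is left out of the Fiat--Shamir challenge. First I would introduce the ``weakened'' variant in which the expiry is still carried in the signature tuple, $\sigma = (R,z,t_e)$, but the challenge is computed as $c = H(R,Y,m,f_t(t_c,t_e))$ instead of $c = H(R,Y,m,t_e,f_t(t_c,t_e))$, and show that this variant provides no meaningful time bound.

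The key step is a pure \emph{malleability} attack that uses neither the secret key nor any broken hardness assumption. Let $\sigma = (R,z,t_e)$ be any signature the honest signer issued on $m$ while $t_c < t_e$, so the hashed bit equals $f_t(t_c,t_e) = 1$. A block producer, or any relaying party, replaces $t_e$ by an arbitrary $t_e' \gg t_e$ and forwards $\sigma' = (R,z,t_e')$. Under the weakened rule the verifier recomputes $c' = H(R,Y,m,f_t(t_c,t_e')) = H(R,Y,m,1) = c$ for every current height $t_c \le t_e'$, so $R = g^z Y^{-c}$ still holds and $\sigma'$ verifies. Hence the expiry can be rewritten at will and in particular pushed arbitrarily far into the future, which is exactly the non-expiry behaviour --- and therefore the non-myopic delay attack --- that TB-Sig is meant to eliminate; the same move even lets a producer \emph{revive} a transaction whose original $t_e$ has already passed, as long as it picks $t_e'$ with $f_t(t_c,t_e') = 1$.

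For the converse, I would note that $f_t$ alone carries only a single bit and so cannot separate the many heights sharing a given value of $f_t(t_c,\cdot)$; only $t_e$ itself pins the bound down. Once $t_e$ sits inside the challenge, substituting $t_e' \neq t_e$ forces a fresh challenge $c' = H(R,Y,m,t_e',f_t(t_c,t_e'))$, which differs from $c$ except with negligible probability by the random-oracle (collision-resistance) property of $H$; to make $\sigma' = (R,z',t_e')$ verify the adversary must then produce $z'$ with $g^{z'} = R\cdot Y^{c'}$, i.e.\ $z' = k + s c'$, which is precisely a Schnorr forgery on the ``message'' $(m,t_e')$ the signer never authorized, and is infeasible by the EUF--CMA security of Schnorr in the random-oracle model (hence, via \cite{PS96}, by hardness of discrete log in $\mathbb{G}$). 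Together the two directions show that hashing $t_e$ is both necessary and sufficient to bind the declared validity bound.

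The hard part will not be the cryptography but the \emph{modelling}: one has to state a clean ``time-bound unforgeability'' game capturing the intuition that no accepting signature may ever claim an expiry exceeding one the legitimate signer granted for that message. A workable choice gives $\mathcal{A}$ a signing oracle that answers only on pairs $(m,t_e)$ with $t_e$ above the current height, and declares $\mathcal{A}$ the winner if it outputs $(m^*,(R^*,z^*,t_e^*))$ that verifies at the present height with $t_c \le t_e^*$ while $t_e^*$ exceeds every expiry ever granted for $m^*$ --- this covers both fresh forgeries and revived or extended ones. With that game fixed, the malleability attack above is an immediate win against the weakened scheme, whereas the reduction sketched above bounds $\mathcal{A}$'s advantage against the real scheme by the EUF--CMA advantage against plain Schnorr.
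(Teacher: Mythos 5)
Your proposal is correct and follows essentially the same route as the paper's own sketch: necessity via the trivial malleability attack on the variant that omits $t_e$ from the challenge (the single bit $f_t$ cannot distinguish expiry heights, so $(R,z,t_e')$ still verifies), and sufficiency via the binding of $t_e$ inside $H$. Your converse is in fact slightly more thorough than the paper's sketch, which only rules out reusing $(R,z)$ by second-preimage resistance, whereas you also cover forgeries with a fresh response $z'$ by reducing to Schnorr's EUF--CMA security --- a case the paper defers to its later EUF--CMA/AGM analysis.
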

\begin{proof}[Sketch]
Suppose $t_e$ were excluded from the Fiat--Shamir challenge. Then a block
producer given a valid signature $\sigma=(R,z,t_e)$ could simply replace
$t_e$ by some $t'_e > t_c$, since the pair $(R,z)$ would still verify
against $Y$. This trivial malleability would allow an adversary to extend
the validity window indefinitely and strategically delay inclusion.

By hashing the end block height $t_e$ into the challenge
\(
c = \nolinebreak H(R,Y,m,t_e,f_t(t_c,t_e)),
\)
any change of $t_e$ necessarily alters the challenge input. To produce a
forged transcript $\sigma'=(R,z,t'_e)$, an adversary must find a second
pre-image satisfying
\(
H(R,Y,m,t'_e,1) = H(R,Y,m,t_e,1).
\)
Under the collision-resistance of $H$, this occurs with probability at most
$2^{-n}$ for security parameter $n$. Thus expiry is cryptographically bound
to the signature and cannot be malleated, ensuring that the producer cannot
extend a transaction’s validity beyond the signer’s intent.
\renewcommand{\qedsymbol}{}
\qedhere
\end{proof}
\subsection{Blockchain clock}
\begin{prop}
    A decentralized blockchain is a valid blockchain clock for our signature scheme.
\end{prop}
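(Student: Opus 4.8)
The plan is to first make precise what ``valid blockchain clock'' must mean for TB-Sig, by extracting from the construction of the previous section the exact properties the scheme relies on, and then to verify each property against the guarantees that a decentralized proof-of-work or proof-of-stake chain provides under its standard security model. Concretely, the scheme needs: (i) a \emph{monotone}, integer-valued quantity $t_c$ (the height of the tip) that never decreases over the life of an honest node's view; (ii) \emph{liveness}, so that $t_c$ eventually advances past any fixed $t_e$, guaranteeing expiry actually occurs; (iii) \emph{agreement}, so that any two honest verifiers evaluating $f_t(t_c,t_e)$ on the canonical chain obtain the same bit for a block that is sufficiently deep; and (iv) \emph{integrity}, so that no adversary within the chain's threat model can make $t_c$ appear smaller than it truly is (which would illegitimately extend a validity window) or larger (which would prematurely invalidate an honest signature).

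Next I would discharge (i) and (ii) directly: the append-only structure of the ledger makes block height nondecreasing by construction, and the chain-growth / liveness property of a secure consensus protocol---new honest blocks are appended at a positive rate---gives that $t_c$ surpasses any finite bound almost surely, so $f_t$ eventually returns $0$. For (iii) I would invoke the common-prefix (or, on finality-based chains, the finality) property: honest parties agree on all but the last $k$ blocks except with probability negligible in $k$, so evaluating $f_t$ against the height of the block in which a transaction is \emph{included}---as prescribed by the verification procedure---yields a well-defined, agreed-upon value. For (iv) I would appeal to the economic-security argument already referenced in Section~\ref{sec:background}: rewriting or stalling the tip requires controlling a majority of hash power or stake, which the consensus protocol's reward/slashing structure renders irrational; hence $t_c$ is as tamper-resistant as the chain itself, and it is precisely the qualifier ``decentralized'' that rules out a single party unilaterally dictating the clock.

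The step I expect to be the main obstacle is property (iii) in the presence of short reorgs: a blockchain provides only \emph{eventual} agreement, not instantaneous consensus, so a transaction can be validated in a slot-$n$ block that is later orphaned, after which its $t_e=n$ signature has expired on the new canonical chain. I would handle this exactly as the paper already frames it in Section~\ref{sec:background}: the expiry bound is deliberately conservative, a stale-view validation that is subsequently orphaned merely forces the signer to resubmit under fresher information, and a signer who wishes to hedge against reorgs of depth up to $k$ sets $t_e \geq t_c + k$. Thus the clock need only behave as a \emph{$k$-stable} Lamport-style clock, and under the common-prefix property with parameter $k$ this is exactly what a decentralized chain delivers; the claim then follows by assembling (i)--(iv), noting that each reduces to a standard consensus guarantee rather than to any new assumption.
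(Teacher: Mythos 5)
Your argument is correct and, at its core, it is the same reduction the paper makes: the clock's trustworthiness is inherited from the consensus layer, and the two ways to cheat the clock are advancing the height (premature expiry, requiring the ability to produce blocks at will) and rewinding it (requiring a deep reorganization that the fork-choice or finality rules make negligible). Where you differ is in packaging: the paper gives only this two-attack-vector sketch plus the observation that the clock ticks once per block (i.e., at the pace of the inclusion auction), and explicitly phrases the conclusion as a conjecture conditional on chain security, whereas you first pin down what ``valid clock'' means via four named properties (monotonicity, liveness, agreement, integrity) and discharge each from standard backbone-style guarantees (chain growth, common prefix/finality, economic security). Your treatment is the more rigorous one --- in particular, the paper never states the liveness requirement that $t_c$ must eventually exceed any $t_e$ so that expiry actually happens, nor the $k$-stable agreement point; the short-reorg caveat you fold into property (iii) appears in the paper only as an informal remark in the introduction, not in this argument. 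The only element of the paper's discussion you omit is the explicit remark that the clock's granularity (one tick per block) matches the granularity of the auction, which is worth adding as a one-line suitability observation.
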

The signature scheme requires a universally available blockchain clock provided by a public blockchain. Attacking the signature scheme would require \textit{pushing the block height on} to invalidate the signature prematurely, meaning the attacker has the ability to break the consensus mechanism and produce blocks at will. 
This would require a fundamental attack against hash-based PoW mechanisms or PoS mechanisms such as VRFs. 
Thus, it is safe to conjecture the signature scheme is secure as long as the blockchain is. 
A signature is also attackable by \textit{reducing the block height}; this would require a reorganization of the chain and the creation of a new chain that is selected as the canonical chain; by design, this is unlikely through the chains' respective chain selection rules or finality rules.
The clock also ticks at the pace of the auction once every produced block, making it a viable clock for the setting.
\subsection{Validation by Chain Consensus Rules}
\begin{prop}
The time-bound signature scheme is secure under a blockchain's consensus security bounds.
\end{prop}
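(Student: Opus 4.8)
The plan is to split the statement into a \emph{cryptographic} layer and a \emph{consensus} layer and then compose the two bounds. I would first pin down the security experiment: an EUF--CMA game in which $\mathcal{A}$ receives the public key $Y$ together with a signing oracle that, on a query $(m,t_e)$ with $t_e>t_c$, returns the TB-Sig signature $\sigma=(R,z,t_e)$. The adversary wins if it outputs $(m^{*},\sigma^{*}=(R^{*},z^{*},t_e^{*}))$ accepted by TB-Sig verification at the tip height $t_c^{*}$ delivered by the honest consensus layer, and moreover either (i) $m^{*}$ was never signed, (ii) $m^{*}$ was signed only with expiry heights $\neq t_e^{*}$, or (iii) $t_c^{*}>t_e^{*}$. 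Conditions (i)--(ii) capture unforgeability and non-malleability of the expiry field; condition (iii) is the time-bound property itself.

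For (i)--(ii) I would exhibit a reduction $\mathcal{B}$ turning a TB-Sig forger into a vanilla Schnorr forger --- equivalently, into a discrete-log solver via the forking-lemma/ROM argument of \cite{PS96} (or the AGM reduction). The key observation, already used to argue that $f_t$ does not weaken the challenge, is that the time-check bit is deterministic and equals $1$ throughout signing, so the object effectively signed is the extended message $m\,\|\,t_e\,\|\,1$. Accordingly $\mathcal{B}$ simulates $H(R,Y,m,t_e,b)$ by querying its own Schnorr random oracle on $(R,Y,\,m\,\|\,t_e\,\|\,b)$ and relays signing queries verbatim; a fresh nonce $R$ per query keeps oracle-reprogramming collisions negligible. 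A TB-Sig forgery on $(m^{*},t_e^{*})$ then becomes a Schnorr forgery on the fresh message $m^{*}\,\|\,t_e^{*}\,\|\,b^{*}$, the only failure mode being a hash collision between $(m^{*},t_e^{*},b^{*})$ and a previously queried triple, which the binding argument for $t_e$ bounds by $O(q^{2}/2^{n})$ for $q$ hash queries. This gives $\mathrm{Adv}^{\text{tb}}_{\text{(i),(ii)}}(\mathcal{A})\le\mathrm{Adv}^{\text{Schnorr}}_{\text{EUF-CMA}}(\mathcal{B})+O(q^{2}/2^{n})$.

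For (iii) I would argue by exhaustion over how TB-Sig verification could accept $\sigma^{*}$ while $t_c^{*}>t_e^{*}$. Since $f_t(t_c^{*},t_e^{*})=0$ by \eqref{eq:tfunc}, the recomputed challenge is $c^{*}=H(R^{*},Y,m^{*},t_e^{*},0)$, and for $R^{*}=g^{z^{*}}Y^{-c^{*}}$ to hold the adversary must have produced a valid pair $(R^{*},z^{*})$ relative to a challenge carrying the bit $0$ --- that is, a forgery on the fresh extended message $m^{*}\,\|\,t_e^{*}\,\|\,0$, already covered by the previous paragraph --- \emph{unless} the height $t_c^{*}$ used by the verifier is not the true canonical tip. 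That residual case is precisely a blockchain-clock attack: either ``pushing the height on'' (producing blocks at will, i.e.\ breaking PoW/PoS liveness) or ``reducing the height'' (forcing a reorg past a finalized block), both excluded by the blockchain-clock claim under the chain's consensus-security term $\mathrm{Adv}^{\text{cons}}$. Collecting the branches yields
\[
\mathrm{Adv}^{\text{tb}}(\mathcal{A})\;\le\;\mathrm{Adv}^{\text{Schnorr}}_{\text{EUF-CMA}}(\mathcal{B})\;+\;O\!\left(q^{2}/2^{n}\right)\;+\;\mathrm{Adv}^{\text{cons}},
\]
so TB-Sig is secure whenever Schnorr, the hash $H$, and the underlying chain all are.

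I expect the main obstacle to be formalizing case (iii) without circularity: stating the time-bound winning condition and the consensus term so that the cryptographic and consensus reductions compose cleanly, in particular making explicit that the ``current height'' fed to verification is the one supplied by an honest consensus layer, so every deviation is chargeable to a consensus failure rather than to the signature scheme. A secondary subtlety is the short-reorg edge case discussed in Section~\ref{sec:background}: a transaction that verifies when first included but is ``expired'' after a one-block reorg is not a break, since verification is simply re-run against the new block's height; the write-up should record that this is consistent with the winning condition, as no adversary ever forced acceptance of an expired signature in the canonical chain.
\renewcommand{\qedsymbol}{}
\qedhere
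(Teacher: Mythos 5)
Your plan is sound, but it takes a noticeably different and more formal route than the paper. The paper treats this claim purely as a \emph{consensus-layer} statement: the only new attack it considers here is a producer (or cartel) deliberately placing a transaction with $t_c > t_e$ into a block, and it argues informally, by cases over chain-based versus BFT consensus, that honest validators reject such a block unless the adversary exceeds the $50\%$ (resp.\ $f<N/3$) threshold, adding a longest-chain/forking discussion to argue an adversarial fork containing such a block will not persist; the cryptographic unforgeability is deliberately kept out of this claim and handled in the separate EUF--CMA/AGM results. You instead prove a single composition bound $\mathrm{Adv}^{\text{tb}}(\mathcal{A})\le \mathrm{Adv}^{\text{Schnorr}}_{\text{EUF-CMA}}(\mathcal{B})+O(q^{2}/2^{n})+\mathrm{Adv}^{\text{cons}}$, folding in the reduction to vanilla Schnorr on the extended message $m\,\|\,t_e\,\|\,b$ (which the paper obtains separately via its forking lemma and AGM theorem) and charging any manipulation of the verification height to a consensus failure. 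Your observation in case (iii) --- that acceptance of an expired signature at the \emph{true} canonical height would itself constitute a forgery with respect to the bit-$0$ challenge, so the residual consensus term only needs to cover a falsified height --- is a sharper decomposition than the paper's, which instead assumes the expired transaction simply fails verification and asks whether a block ignoring that failure can be finalized. What your approach buys is a quantitative, composable statement and an explicit security experiment; what the paper's buys is simplicity and a direct match to how the attack would actually manifest (a producer knowingly shipping an invalid block), at the cost of remaining informal. Neither route is wrong; just be aware that to match the paper you would shrink your argument to the consensus branch alone and cite the existing EUF--CMA results rather than re-deriving them.
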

An adversary $\mathcal{A}$ may attempt to submit a block for validation to the network that includes an invalid transaction where $t_c > t_e$. The acceptance or rejection of that block depends on the blockchain network's consensus rules.
The security of the time-bound signature scheme relies on the underlying blockchain's consensus mechanism and inherits its security. We consider two primary types of consensus protocols: chain-based consensus (such as Bitcoin) and Byzantine Fault Tolerant (BFT) consensus protocols.

\subsubsection{Chain-based Consensus}
    In a chain-based consensus protocol, security is typically guaranteed under the assumption that an adversary controls less than 50\% of the total hashing power/relative stake. The signature scheme is secure under these conditions because:
    \begin{itemize}
        \item \textbf{Block Validation:} If an adversary attempts to include an invalid transaction (with $t_c > t_e$) in a block, the majority of honest block producers will reject the block based on consensus rules.
        \item \textbf{Attack Resistance:} The scheme is secure against $50\%+\epsilon$ attacks, meaning any attack that requires more than $50\%$ of the network's hashing power is considered infeasible under typical security assumptions. Formally, let $Pr(\mathcal{A})$ denote the probability that an adversary $\mathcal{A}$ can successfully include an invalid transaction. For security, we require \(Pr(\mathcal{A}) \leq \epsilon\) where $\epsilon$ is a negligible probability.
    \end{itemize}
\subsubsection{BFT-style Consensus Protocols}
    In BFT consensus protocols, security is typically guaranteed as long as no more than 33\% of the nodes are malicious. The signature scheme conforms to these security bounds because:
    \begin{itemize}
        \item \textbf{Block Validation:} In BFT protocols, an invalid transaction will not be accepted unless at least 33\% of the producers are colluding. Let $N$ be the total number of producers and $f$ be the number of malicious producers. For security, we require \(f < \frac{N}{3}\)
        \item \textbf{Attack Resistance:} The scheme is secure against attacks where up to 33\% of producers are malicious. The probability $Pr(\mathcal{A})$ that an adversary $\mathcal{A}$ can successfully include an invalid transaction is negligible if $f < \frac{N}{3}$.
    \end{itemize}
Allowing strategic behavior by a block producer or a cartel of block producers requires collusion at a level capable of more significant attacks than merely including an out-of-date, time-bound transaction. If a cartel of block producers includes an invalid transaction in a block and this block is validated, it could lead to a chain fork. Honest block producers joining the network will likely reject the invalid block, maintaining the integrity of the blockchain.
\subsubsection{Forking}
Assume that $\alpha$ is the fraction of honest block producers and $\beta$ is the fraction of adversarial block producers, with $\alpha + \beta = 1$. The probability of an honest block being added to the longest chain is \(Pr(h) = \frac{\alpha}{\alpha + \beta}\).
If $\alpha > 0.5$, the longest chain will eventually consist of honest blocks, as the probability of honest blocks outpacing adversarial blocks increases over time, reducing the non-myopic benefit of attempting to include an invalid transaction. 
Thus, the signature scheme's security is inherently tied to the security model of the underlying blockchain. Under the assumption of an honest majority ($50\%+\epsilon$ for chain-based consensus and $2f+1$ for BFT protocols), the time-bound signature scheme remains secure. The possibility of including invalid transactions through collusion requires a level of attack that would compromise the entire blockchain, making the scheme as secure as the consensus mechanism it relies on.
\subsection{EUF-CMA}
\subsubsection{EUF–CMA game for the time-bound Schnorr scheme}\label{sec:EUF-CMA-game}
Let ${\sf TB\text{-}Sig}$ denote the time-bound Schnorr scheme.  
$\mathcal{A}$ has adaptive access to:
        \begin{itemize}
           \item a hash oracle ${\sf RO}(\cdot)$ implementing the
                 random oracle $H$;
           \item a signing oracle
                 ${\sf Sign}(m,t_e)$ that returns a valid signature
                 $\sigma=(R,z,t_e)$ on the pair $(m,t_e)$ provided
                 $(m,t_e)$ has not been queried before. %
                 There is no restriction on the relation between
                 the real-world block height $t_c$ and the input $t_e$
                 inside the experiment; the flag $f_t(t_c,t_e)$ is fixed
                 to~$1$ inside ${\sf Sign}$.
            \item ${\sf Verify}_{Y}(m,\sigma,t_c)$ which verifies that $\sigma$ is a valid signature for $m$ at block $t_c$.
        \end{itemize}
The EUF–CMA experiment $\mathsf{EUF}^{\mathcal{A}}_{TB\text{-}Sig}(n)$ between a
probabilistic adversary $\mathcal{A}$ and a challenger $\mathcal{C}$ proceeds
as follows:

\begin{enumerate}
  \item \textit{Key generation.} 
        $\mathcal{C}$ runs ${\sf KeyGen}(1^n)$, obtaining $(s,Y)$ with
        $Y=g^{s}\!\in \mathbb{G}$.  It gives $Y$ to $\mathcal{A}$.
  \item \textit{Query.} $\mathcal{A}$ may repeatedly ask for signatures on chosen messages $(m_1, \dots, m_q)$ of its choosing, and receives the valid signatures $(\sigma_1, \dots, \sigma_q)$ in response.
  These queries and responses can be made adaptively and interactively.
  \item \textit{Forgery.} 
        Eventually $\mathcal{A}$ outputs $(m^\star,t_e^\star,\sigma^\star)$.
        Let $t_c^\star$ be the block-height parameter supplied by the
        experiment to ${\sf Verify}$.  
        $\mathcal{A}$ wins if
        $ (m^\star,t_e^\star)$
            was never given to \\${\sf Sign}$ 
            in the Query phase, and $
            {\sf Verify}_{Y}(m^\star,\sigma^\star,t_c^\star)=1$.
\end{enumerate}
The advantage $\mathrm{Adv}_{TB\text{-}Sig}^{\mathrm{EUF}}$ of adversary $\mathcal{A}$ is
$\mathrm{Adv}_{TB\text{-}Sig}^{\mathrm{EUF}}(\mathcal{A})
 =\Pr[\mathsf{EUF}^{\mathcal{A}}_{TB\text{-}Sig}(n)=1]$.
 
\subsubsection{Signing-oracle simulation}

Let \(\mathcal{S}\) be the reduction that is given the challenge public key \(Y=g^{s}\) with unknown \(s\) and interacts
with a forging adversary \(\mathcal{A}\).

\begin{algorithm}[H]
\caption{$\mathcal{S}.{\sf Sign}(m,t_e)$ without knowledge of $s$\label{alg:signsim}}
\begin{algorithmic}[1]
  \State Pick $c \xleftarrow{\$}\! \mathbb{Z}_q$ and
         $z \xleftarrow{\$}\! \mathbb{Z}_q$
  \State Compute $R \gets g^{z}\,Y^{-c}$
  \State Set $H(R,Y,m,t_e,1) := c$ in the random oracle.
  \State Return $\sigma=(R,z,t_e)$
\end{algorithmic}
\end{algorithm}

We ensure $(R,c,z)$ satisfies  
$g^{z}=R\,Y^{c}$; programming the oracle guarantees
$\sigma$ will verify.  The distribution of $\sigma$ is identical to that
of a real signature because $(R,c)$ is uniform and independent.
\(\mathcal{S}\) therefore perfectly simulates ${\sf Sign}$ for
\(\mathcal{A}\).
\subsubsection{Algebraic Group Model}
The Algebraic Group Model\cite{agm} assumes that all adversaries are \textit{algebraic}. Formally, an algorithm \( \mathcal{A} \) is algebraic if whenever it outputs a group element \( A \in \mathbb{G} \), it also outputs the vector \( \mathbf{a} = (a_1, \ldots, a_n) \) such that \( A = \prod_{i=1}^n h_i^{a_i} \) for some known group elements \( h_1, \ldots, h_n \in \mathbb{G} \). That is, the adversary not only outputs the group element but also a linear combination of all previously seen group elements that generate it.
\begin{lemma}[Time-bound Forking Lemma]
\label{lem:tbFork}
Let $\mathcal{A}$ be any PPT adversary that makes at most
$q_H$ random-oracle queries and $q_S$ signing-oracle queries in the EUF–CMA experiment
of Section~\ref{sec:EUF-CMA-game}, and that outputs a valid
forgery with a probability $\epsilon$. 
Running $\mathcal{A}$ in the rewinding algorithm
${\sf Fork}^{\mathcal{A}}$ defined exactly as in~\cite{PS96}
but with the hash input
\(
   (R,Y,m,t_e,f_t)
\)
yields, with probability at least
\(
   {\epsilon^2 \Big/ \bigl(q_H+q_S+1\bigr)}
\),
two accepting transcripts
\(
   (R,c_1,z_1,t_e) \neq (R,c_2,z_2,t_e)
\)
such that $c_1\neq c_2$ and
\(
   g^{z_i}=R\,Y^{c_i}\ (i=1,2).
\)
\end{lemma}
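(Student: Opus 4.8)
\emph{Proof plan.} The strategy is to reduce the claim to the General Forking Lemma of Pointcheval--Stern~\cite{PS96} (equivalently, the abstract forking lemma of Bellare--Neven), observing that enlarging the Fiat--Shamir hash input from $(R,Y,m)$ to $(R,Y,m,t_e,f_t)$ is cosmetic: by Claims~1 and~2 the extra fields are random-oracle inputs like any other, and the random oracle is still queried on exactly the string that verification recomputes. First I would build the wrapper $B$ that the forking machine invokes. On input the challenge key $Y$ and a fresh vector $(h_1,\dots,h_N)$ with $N=q_H+q_S+1$, the wrapper runs $\mathcal{A}$, answers its $i$-th distinct random-oracle query with $h_i$, and answers signing queries $(m,t_e)$ using the honest-verifier simulator of Algorithm~\ref{alg:signsim}, which --- as argued in the ``Signing-oracle simulation'' subsection --- perfectly simulates ${\sf Sign}$: it picks $c,z$ uniformly, sets $R\gets g^{z}Y^{-c}$, and programs $H(R,Y,m,t_e,1):=c$. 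Because each signing call consumes one random-oracle slot, the slot count is $q_H+q_S$, plus one for the forgery query itself, which explains the $N=q_H+q_S+1$ appearing in the bound.

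Next I would pin down the ``critical'' query. When $\mathcal{A}$ halts with a valid forgery $(m^\star,t_e^\star,\sigma^\star)$, $\sigma^\star=(R^\star,z^\star,t_e^\star)$, verification checks $g^{z^\star}=R^\star Y^{c^\star}$ with $c^\star=H(R^\star,Y,m^\star,t_e^\star,f_t(t_c^\star,t_e^\star))$. A standard argument gives that, except with probability at most $1/q$, $\mathcal{A}$ must have queried the oracle on this exact point (otherwise $c^\star$ is uniform and the check passes only with probability $1/q$); moreover, since $(m^\star,t_e^\star)$ was never submitted to ${\sf Sign}$, this point is not one of the slots programmed by the simulator, except when a simulated $R$ accidentally collides with the queried $R^\star$, which happens with probability at most $O\!\big((q_H+q_S)^2/q\big)$. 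Hence with probability $\mathrm{acc}\ge\epsilon-\mathrm{negl}(n)$ the wrapper can output the index $i^\star\in\{1,\dots,N\}$ of that query together with the transcript data $(R^\star,z^\star,t_e^\star)$.

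Finally I would invoke the forking lemma itself. ${\sf Fork}^{\mathcal{A}}$ runs $B$ once, obtains $(i^\star,\cdot)$, rewinds to just before the $i^\star$-th query, and reruns $B$ on the same prefix $h_1,\dots,h_{i^\star-1}$ with a fresh tail $h'_{i^\star},\dots,h'_N$. Since the two executions are identical up to and including the moment the $i^\star$-th query string is formed, that string --- hence $R$, $m$, $t_e$ and the flag $f_t$ --- is the same in both runs; the forked answers furnish $c_1=h_{i^\star}\neq h'_{i^\star}=c_2$, and each run ends with a forgery verifying against the shared $R$, i.e.\ $g^{z_i}=R\,Y^{c_i}$ for $i=1,2$, as claimed. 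The General Forking Lemma then bounds the success probability by $\mathrm{acc}\,(\mathrm{acc}/N-1/q)$; substituting $N=q_H+q_S+1$ and $\mathrm{acc}\ge\epsilon-\mathrm{negl}(n)$ and discarding the negligible $1/q$ term yields the stated $\epsilon^2/(q_H+q_S+1)$. The one place that needs genuine care --- and the main obstacle --- is the bookkeeping around the signing oracle rather than anything about time: one must check that the simulator's programming never clashes with a prior oracle query, that the forgery's critical query is provably not a simulator-programmed slot (this is exactly where freshness of $(m^\star,t_e^\star)$ is used), and that the index set over which forking happens is precisely the $q_H+q_S+1$ table entries, so that the combinatorial core of~\cite{PS96} applies verbatim.
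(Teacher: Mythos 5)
Your proposal is correct and follows essentially the same route as the paper: rewind $\mathcal{A}$ at the critical random-oracle query for the enlarged hash input $(R,Y,m,t_e,f_t)$, use the Algorithm~\ref{alg:signsim} simulator for signing queries, and apply the Pointcheval--Stern/general forking bound with $q_H+q_S+1$ oracle slots to get two transcripts with the same $R,t_e$ and distinct challenges, hence $\epsilon^2/(q_H+q_S+1)$ up to negligible terms. The only difference is presentational: you invoke the general forking lemma via a wrapper and spell out the bookkeeping (existence of the critical query, non-collision with simulator-programmed entries) that the paper's shorter direct rewinding argument leaves implicit.
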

\begin{proof}
Let \(q = q_H + q_S + 1\) as in~\cite{cryptoeprint:2012/029}. We define the forking reduction ${\sf Fork}^{\mathcal{A}}$ as follows. Run $\mathcal{A}$ once on fresh randomness, answer all oracle queries honestly, and record both the sequence of random‐oracle replies $(h_1,\dots,h_q)$ and $\mathcal{A}$’s random tape.  If no valid forgery appears, the reduction aborts.  Otherwise, it picks a uniform index $j\in\{1,\dots,q\}$, rewinds $\mathcal{A}$ to just before its $j$-th random‐oracle call, and returns a fresh independent $n$-bit reply $h'_j$ while replaying all other replies and the original tape.  If the second run also outputs a forgery, the two transcripts are returned; otherwise the reduction aborts. 
Now, let $\varepsilon$ be the probability that the first execution forges, and let $j^\star$ be the (unique) position where the challenge $c^\star=H(R,Y,m,t_e,f_t)$ is first queried.  Conditioned on a successful first run, the probability of choosing $j=j^\star$ is $1/q$.  In that case, the fresh reply $h'_j$ makes the second challenge $c_2$ independent of $c_1$, so $\Pr[c_2\neq c_1]=1-2^{-n}$.  Thus:
\begin{align*}\small
  \Pr[\text{2 distinct transcripts}]
  &\ge \varepsilon \times \frac{1}{q} \times (\varepsilon (1 - 2^{-n}))
  &\ge \frac{\varepsilon^2}{q}\,
\end{align*}
absorbing the negligible $2^{-n}$ term for any practical $n\ge128$.  The total work is at most twice that of $\mathcal{A}$ plus $q$ extra oracle calls.  Since queries before $j$ (and all signing replies) are replayed identically, each transcript satisfies $g^{z_i}=R\,Y^{c_i}$ with $c_1\neq c_2$, completing the fork.
\end{proof}
\subsubsection{AGM security model}
\begin{theorem}
Assuming the discrete logarithm problem in \( \mathbb{G} \) is hard, if there exists an algebraic adversary \( \mathcal{A} \) that can produce a valid forgery for the time-bound Schnorr signature scheme with non-negligible probability, then there exists an algorithm that can solve the discrete logarithm problem in \( \mathbb{G} \) with non-negligible probability. That is, if \( \mathcal{A} \) can output a forgery \( (m^*, \sigma^* = (R^*, z^*, t_e^*)) \) for a new message \( m^* \) not queried to the signing oracle, then there exists an algorithm \( \mathcal{S} \) that, given \( Y = g^s \), can compute \( s \) with non-negligible probability.
\end{theorem}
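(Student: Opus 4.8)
The plan is to build a reduction $\mathcal{S}$ that turns an algebraic EUF--CMA forger $\mathcal{A}$ against ${\sf TB\text{-}Sig}$ into a discrete-logarithm solver in $\mathbb{G}$, following the algebraic-representation strategy for Schnorr rather than rewinding. $\mathcal{S}$ receives a DLog instance $Y=g^{s}$, hands $Y$ to $\mathcal{A}$ as the public key, and plays the challenger of the game in Section~\ref{sec:EUF-CMA-game}: random-oracle queries are answered lazily from a table with fresh uniform replies in $\mathbb{Z}_q$, consistently on repeats; signing queries $(m,t_e)$ are answered by Algorithm~\ref{alg:signsim}, i.e.\ $\mathcal{S}$ samples $c,z\xleftarrow{\$}\mathbb{Z}_q$, sets $R\gets g^{z}Y^{-c}$, programs $H(R,Y,m,t_e,1):=c$, and returns $(R,z,t_e)$. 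As already shown, this perfectly simulates ${\sf Sign}$; the simulation can only go wrong if $\mathcal{S}$ must program a point $\mathcal{A}$ has already queried, which --- since each $R$ is uniform in $\mathbb{G}$ --- happens with probability at most $q_Sq_H/q$.

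When $\mathcal{A}$ halts with a forgery $(m^\star,t_e^\star,\sigma^\star)$ where $\sigma^\star=(R^\star,z^\star,t_e^\star)$, I would use algebraicity: every group element $\mathcal{A}$ outputs, in particular $R^\star$, comes with a representation over the elements it has seen, namely $g$, $Y$, and the commitments $R_1,\dots,R_{q_S}$ returned by the signing simulation. Substituting $R_i=g^{z_i}Y^{-c_i}$ into that representation, $\mathcal{S}$ rewrites it as $R^\star=g^{a}Y^{b}$ with $a,b\in\mathbb{Z}_q$ it can compute explicitly. Since the forgery verifies at the experiment's height $t_c^\star$, we have $g^{z^\star}=R^\star Y^{c^\star}$ with $c^\star=H\!\bigl(R^\star,Y,m^\star,t_e^\star,f_t(t_c^\star,t_e^\star)\bigr)$, so
\[
 g^{z^\star}=g^{a}Y^{\,b+c^\star}=g^{\,a+s(b+c^\star)},
\]
i.e.\ $z^\star\equiv a+s(b+c^\star)\pmod q$. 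Whenever $b+c^\star\not\equiv 0$, $\mathcal{S}$ returns $s=(z^\star-a)(b+c^\star)^{-1}\bmod q$; the running time of $\mathcal{S}$ is essentially that of $\mathcal{A}$ plus $O(q_H+q_S)$ group operations.

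It then remains to bound the bad event $b+c^\star\equiv 0\pmod q$. The key observation is that $\mathcal{A}$ must commit to its representation of $R^\star$ --- hence fix $b$ --- no later than the first random-oracle query on a tuple containing $R^\star$, whereas $c^\star$ is the uniform reply to that query (or, if $\mathcal{A}$ never makes it, a value drawn fresh at verification time); thus $\Pr[c^\star=-b]=1/q$. The time flag needs only a little extra care: verification uses $f_t(t_c^\star,t_e^\star)\in\{0,1\}$, but for either value the relevant tuple is one $\mathcal{S}$ answers consistently, and because $(m^\star,t_e^\star)$ was never submitted to ${\sf Sign}$ it is never one of the programmed ``$1$''-points, so no inconsistency can arise. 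Putting this together, $\mathcal{S}$ solves DLog with probability at least $\epsilon-(q_Sq_H+1)/q$, which is non-negligible whenever $\epsilon$ is, matching the vanilla-Schnorr guarantee in the AGM.

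I expect the main obstacle to be making this last step airtight against an adversary that may output $R^\star$ several times with different representations, interleave and reorder its oracle calls adaptively, or try to steer $b$ towards $-c^\star$ via correlated queries; this is exactly where the AGM discipline --- a representation accompanies every group-element output --- carries the argument. If one wishes to avoid the AGM, a fallback is to invoke the Time-bound Forking Lemma (Lemma~\ref{lem:tbFork}) to obtain two accepting transcripts $(R^\star,c_1,z_1,t_e^\star)\neq(R^\star,c_2,z_2,t_e^\star)$ with $c_1\neq c_2$ and $g^{z_i}=R^\star Y^{c_i}$, whence $s=(z_1-z_2)(c_1-c_2)^{-1}\bmod q$; this reproves the theorem in the ROM alone, but only with the looser bound $\epsilon^2/(q_H+q_S+1)$, so I would keep the tight AGM reduction as the primary argument.
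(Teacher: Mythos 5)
Your proposal is correct, but it follows a genuinely different route from the paper's own proof. The paper, despite stating the theorem ``in the AGM,'' actually proves it by rewinding: it invokes the Time-bound Forking Lemma (Lemma~\ref{lem:tbFork}) to obtain two accepting transcripts $(R^\star,c_1,z_1,t_e^\star)$ and $(R^\star,c_2,z_2,t_e^\star)$ with $c_1\neq c_2$, and extracts $s=(z_1-z_2)/(c_2-c_1)$ --- exactly what you sketch as your ``fallback,'' with the attendant $\epsilon^2/(q_H+q_S+1)$ loss. Your primary argument instead uses the algebraic representation in earnest: rewriting $R^\star=g^{a}Y^{b}$ via the programmed signing transcripts and solving $z^\star=a+s(b+c^\star)$ gives a straight-line, rewinding-free reduction with only an additive $O((q_Sq_H+1)/q)$ loss, i.e.\ a tight bound, and it arguably matches the AGM framing of the theorem better than the paper's own proof does. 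One small point to tighten: since the adversary may query the oracle on many candidate tuples, the bad event $b+c^\star\equiv 0$ should be bounded by roughly $(q_H+1)/q$ (one fixed-point chance per query plus the unqueried case) rather than $1/q$; this is still negligible and does not affect the conclusion. In short, your reduction is sound and strictly tighter; the paper's forking-lemma proof is looser but avoids relying on the algebraic-representation bookkeeping you correctly flag as the delicate part.
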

\begin{proof}
Let \( \mathbb{G} \) be a cyclic group of prime order \( q \) with generator \( g \). The private key \( s \) is chosen uniformly at random from \( \mathbb{Z}_q \), and the public key is \( Y = g^s \). We assume a cryptographically secure hash function \( H \) modeled as a random oracle.\\
The simulator \( \mathcal{S} \) receives a challenge \( Y \) (where \( Y = g^s \) and \( s \) is unknown). \( \mathcal{S} \) interacts with the adversary \( \mathcal{A} \) as follows:\\
Setup:
\begin{enumerate}
    \item The simulator \( \mathcal{S} \), \( Y = g^s \), interacts with \( \mathcal{A} \) as described.
    \item \( \mathcal{S} \) simulates the signing oracle for \( \mathcal{A} \) and maintains a list of queried message-time pairs.
\end{enumerate}
Forgery:
\begin{enumerate}
    \item Suppose that the adversary \( \mathcal{A} \) outputs a valid forgery \( {(m^*, \sigma^* = (R^*, z^*, t_e^*))} \) that was not queried. By Lemma~\ref{lem:tbFork}, with high probability we can obtain two valid signatures \( (R^*, z_1, t_e^*) \) and \( (R^*, z_2, t_e^*) \) with different challenges \( c_1 \) and \( c_2 \).
    \item We can then write \( g^{z_1} = R^* \cdot Y^{c_1} \) and \( g^{z_2} = R^* \cdot Y^{c_2} \), from which we derive
    $g^{z_1 - z_2} = Y^{c_2 - c_1}$. From the latter we finally obtain $s = \frac{z_1 - z_2}{c_2 - c_1}$.
    

\end{enumerate}
By extracting \( s \), we solve the discrete logarithm problem, thus proving the security of TB-Sig under the discrete logarithm assumption in the AGM.
\end{proof}

\section{Stackelberg Model with MEV Extraction}\label{sec:stack_model}


In this Section, we study a simple Stackelberg game \cite{fudenberg1991game} that models the strategic interaction between bidders (the leaders) and a block producer (the follower) in the EIP-1559 Transaction Fee Mechanism.

In EIP-1559, each new block is associated with a base fee $b_f$, which is determined based on on-chain history, and has a constant target capacity $s^*$. The maximum quantity of gas allowed in a block is $2s^*$.
Bidder $i$ has a valuation $\bar{v}_i$ for her transaction being included in a block, and bids two values: (\textit{i}) a \textit{fee cap} $b^c_i$, which represents the maximum amount of currency per unit of gas that she is willing to pay; and (\textit{ii}) a \textit{tip} $\tau_i$ which represents the maximum tip per unit of gas.
A transaction can be included in a block only if its fee cap is at least the block's base fee, i.e., $b^c_i\geq b_f$, and pays $\min\{b_f+\tau_i,b^c_i\}$.
A producer is selected to produce a block with a probability $\rho$ that is proportional to her stake. Upon producing a new block, the producer gets: (\textit{i}) a fixed \textit{block reward}, which is independent of the transactions included in the block (hence omitted from our model as not relevant to our discussion); and (\textit{ii}) $\min\{\tau_i, b^c_i-b_f\}$ for each transaction $i$ included in the block.
We assume that the system is in a \textit{steady state} as defined in \cite{AzouviGHH23}. Essentially, this means that the \textit{demand curve} for a unit of gas in the next block \textit{does not change over time}. 
The expected dynamics of EIP-1559 in a steady state are the following: (\textit{i}) all blocks produced are of target size $s^*$; (\textit{ii}) the base fee is constant. The previous two conditions imply that the \textit{base fee is not excessively low}, as defined in \cite{tfmd}. 
The \textit{predicted equilibrium bidding} in this state is the following: (\textit{i}) the fee cap is equal to the bidder's valuation, i.e., $b^c_i = \bar{v}_i$; (\textit{ii}) the tip is a negligible amount $\tau_i >0$.
Roughly speaking, EIP-1559 effectively acts as an \textit{unlimited-supply posted-price} mechanism that \textit{truthfully elicits the fee cap} from each bidder.


The producer's strategy space consists of two strategies: (\textit{i}) \textit{immediate inclusion}, where the producer includes a transaction immediately, collecting the tip; and (\textit{ii}) \textit{delayed inclusion}, where the producer decides to include the transaction in a later block, thus collecting the tip, plus any MEV that has been able to be extracted in the meantime.
%
To model MEV extraction, we use a function $\Delta_i : \mathbb{N} \rightarrow \mathbb{R}$, where $\Delta_i(t)$ represents the MEV that can be extracted from transaction $i$ after $t$ blocks.
We assume that: (\textit{i}) $\Delta_i(t)=0$ for $t\leq t_i$, where $t_i$ is the block height at which the transaction has been issued; (\textit{ii}) $\Delta_i(t)\rightarrow \delta_i$ as $t \rightarrow \infty$, i.e. $\Delta_i$ has a horizontal asymptote which represents the maximum MEV that can be extracted from the transaction; and (\textit{iii}) $\Delta_i$ is monotone non-decreasing
(as argued in \cite{schwarz_schilling2023timing, burian2024futuremev}).
We assume that the producer has access to an \textit{estimate of the MEV} that can be extracted by transaction $i$ if the inclusion of $i$ is delayed to a later block. With a little abuse of notation we denote this estimate as $\Delta_i$.
%
For a generic future block, we denote as $I$ the random variable representing the \textit{minimum value for inclusion}, which is the minimum value to the block producer (consisting of tip plus MEV) at which the block producer will include the transaction in the block. We assume that $I$ is distributed according to the cumulative distribution function $F_I$, which is continuous and defined over a bounded support, and its probability density function $f_I$ is monotone non-decreasing.
We represent the utility $u_i(\tau_i,\textbf{x})$ of bidder $i$ as a function of $i$'s bid and the producer's strategy $\textbf{x}=(x_1,\ldots,x_{s^*})$, where $x_i = 1$ if transaction $i$ is included immediately, and $x_i=0$ otherwise.
The utility function for bidder $i$ is the following:
\begin{equation}
u_i(\tau_i,\textbf{x})= x_i(v_i - \tau_i)+(1-x_i)F_I(\tau_i+\Delta_i)\,(v_i - \tau_i)
\label{eq:bidder_utility}
\end{equation}
where $v_i = \bar{v}_i-b_f$.
This means that bidder $i$'s utility is $v_i-\tau_i$ if its transaction is included immediately, and is $F_I(\tau_i+\Delta_i)\,(v_i - \tau_i)$ in expectation otherwise.
Notice that here the utility of $i$ depends solely on the tip $\tau_i$ that she bids to pay.
This is consistent with the fact that the base fee, by assumption, is constant over the period of interest of our analysis.
Also notice that we assume bidders to be \textit{patient}, i.e., their valuation does not decrease as a consequence of delayed inclusion.

The producer's utility $u_M(\bm{\tau},\textbf{x})$ is instead a function of the bid vector $\bm{\tau}=(\tau_1,\ldots,\tau_{s^*})$ and of the producer's strategy $\textbf{x}$, and is given by the following:
\begin{equation}\small
u_M(\bm{\tau},\textbf{x})=\sum_{i=1}^{s^*} \left(x_i\cdot \tau_i+ (1-x_i)F_I(\Delta_i+\tau_i)(\Delta_i+\tau_i)\rho\right)    
\label{eq:miner_utility}
\end{equation}
In particular, for each transaction $i$ that is immediately included, $M$ gets the tip $\tau_i$, whereas for each transaction whose inclusion is delayed, $M$ gets, in expectation, $F_I(\Delta_i+\tau_i)(\Delta_i+\tau_i)\rho$.
We will also consider the utility of $M$ when a \textit{transaction is obtained from a private mempool}, i.e.,
the transaction is known only to $M$.
The following lemma proves that, in this setting, there is a lower bound to how low a tip can be if a transaction is to be included immediately.
\begin{lemma}\label{thm:min_bid}
(\textit{i}) If EIP-1559 is in a steady state, it is a best response for the producer to immediately include a transaction $i$ only if $\tau_i> \Delta_i\frac{F_I(\Delta_i+\tau_i)\cdot\rho}{1-F_I(\Delta_i+\tau_i)\cdot\rho}$. 
(\textit{ii}) If $i$ was obtained from a private mempool, then $\tau_i> \Delta_i\frac{F_I(\Delta_i+\tau_i)\cdot}{1-F_I(\Delta_i+\tau_i)\cdot}$
\end{lemma}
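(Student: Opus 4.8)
The plan is to read the threshold directly off the producer's utility \eqref{eq:miner_utility} by comparing, for a single transaction $i$, the payoff from immediate inclusion against the expected payoff from delay. Since $u_M(\bm{\tau},\textbf{x})$ is additively separable across the coordinates of $\textbf{x}$, and since the steady-state assumption fixes the base fee, the block size, the MEV estimate $\Delta_i$, and the future inclusion-value distribution $F_I$, the producer's best response decomposes transaction by transaction: for each $i$ the producer picks whichever of $x_i=1$ or $x_i=0$ yields the larger summand, independently of the other transactions.

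Fixing $i$, the $i$-th summand in \eqref{eq:miner_utility} equals $\tau_i$ when $x_i=1$ and $F_I(\Delta_i+\tau_i)(\Delta_i+\tau_i)\rho$ when $x_i=0$. Hence immediate inclusion of $i$ is a best response only if
\[
\tau_i \;\ge\; F_I(\Delta_i+\tau_i)\,(\Delta_i+\tau_i)\,\rho .
\]
Expanding the right-hand side and collecting the terms in $\tau_i$ gives $\tau_i\bigl(1-F_I(\Delta_i+\tau_i)\rho\bigr)\ge \Delta_i\,F_I(\Delta_i+\tau_i)\rho$; since $\rho\le 1$ and, generically, $F_I(\Delta_i+\tau_i)<1$ (unless $\Delta_i+\tau_i$ lies at the very top of the support of $I$), the factor $1-F_I(\Delta_i+\tau_i)\rho$ is strictly positive, so dividing through yields the claimed bound, with the strict inequality ``$>$'' being exactly the condition for immediate inclusion to strictly dominate delay (equality being the indifference case). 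For part (\textit{ii}), a transaction from a private mempool is known only to $M$, so delaying it carries no risk of losing it to a competing producer: $M$ can re-offer it at every block it produces and includes it eventually with certainty. This removes the selection discount $\rho$ from the delay term --- equivalently, one sets $\rho=1$ --- and the same rearrangement gives $\tau_i> \Delta_i\frac{F_I(\Delta_i+\tau_i)}{1-F_I(\Delta_i+\tau_i)}$.

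I expect the algebra to be routine; the real work lies in two modeling justifications. First, one must argue that the per-transaction decomposition of the producer's best response is legitimate in steady state --- i.e., that the $2s^*$ gas cap does not couple the slots, so that ``delay $i$'' frees exactly one slot that the stationary demand refills without changing the payoffs of the other included transactions. Second, the resulting inequality is implicit, since $\tau_i$ also appears inside $F_I$ on the right-hand side; the statement should therefore be read as a necessary fixed-point condition on $\tau_i$ rather than a closed-form threshold, and I would note that a consistent $\tau_i$ exists by continuity of $F_I$ together with monotonicity of $x\mapsto \Delta_i\frac{F_I(\Delta_i+x)\rho}{1-F_I(\Delta_i+x)\rho}$ on the relevant range. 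I would also flag explicitly the boundary assumption $F_I(\Delta_i+\tau_i)\rho<1$, which is what makes the division legitimate and which holds unless $\Delta_i+\tau_i$ sits at the extreme right endpoint of the support of $I$.
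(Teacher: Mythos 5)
Your proposal is correct and follows essentially the same route as the paper: compare the $i$-th summand of $u_M$ under $x_i=1$ (payoff $\tau_i$) with the delayed payoff $F_I(\Delta_i+\tau_i)(\Delta_i+\tau_i)\rho$, rearrange to isolate $\tau_i$, and drop the factor $\rho$ for the private-mempool case. Your added remarks --- the per-transaction separability, the requirement $F_I(\Delta_i+\tau_i)\rho<1$ for the division, and the observation that the bound is an implicit condition since $\tau_i$ appears inside $F_I$ --- are points the paper's own proof passes over silently, so they strengthen rather than diverge from it.
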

\begin{proof}
We only need to consider the $s^*$ transactions that are eligible for inclusion in the current block.

(\textit{i}) From \eqref{eq:miner_utility}, in order to maximize her utility, the producer will include $i$ immediately if $\tau_i >F_I(\Delta_i+\tau_i)(\Delta_i+\tau_i)\rho$, and delay its inclusion otherwise. Solving the above for $\tau_i$, we get that $\tau_i> \Delta_i\frac{F_I(\Delta_i+\tau_i)\cdot \rho}{1-F_I(\Delta_i+\tau_i)\cdot \rho}$.

(\textit{ii}) We observe that in this case the utility of transaction $i$ to the producer is $x_i\cdot \tau_i+ (1-x_i)F_I(\Delta_i+\tau_i)(\Delta_i+\tau_i)$, since $i$ is not known to other producers, and $M$ will eventually produce a block where they can include $i$. A similar argument to part (\textit{i}) can then be applied.
\end{proof}
Notice that, depending on $\Delta_i$ and $F_I$, the minimum tip for immediate inclusion can be quite large.
The following lemma proves that, irrespective of $F_I$ and assuming that $\tau_i \leq \Delta_i$, there is always a value of $\Delta_i$ for which delayed inclusion is a dominant strategy for the producer.
\begin{proposition} \textbf{(Public mempool)}
For any $F_I$, any tip $\tau_i$, and any relative stake $\rho$, there exists a value of MEV $\Delta_i$ such that delayed inclusion is a best response for the producer. In particular, if $\Delta_i >\frac{\tau_i}{F_I(\tau_i)\cdot\rho}$, then the producer's best response is delayed inclusion.
\end{proposition}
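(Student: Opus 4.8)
The plan is to reduce the statement directly to Lemma~\ref{thm:min_bid}(\textit{i}), which (under the steady-state assumption, public mempool) already characterizes exactly when immediate inclusion is a best response. By that lemma, immediate inclusion of transaction $i$ is a best response precisely when $\tau_i > F_I(\Delta_i+\tau_i)(\Delta_i+\tau_i)\rho$; contrapositively, delayed inclusion is a best response whenever
\[
\tau_i \;\leq\; F_I(\Delta_i+\tau_i)\,(\Delta_i+\tau_i)\,\rho .
\]
Thus the whole proposition amounts to exhibiting, for arbitrary $F_I$, $\tau_i$, and $\rho$, a value $\Delta_i$ that makes the right-hand side at least $\tau_i$.

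For the explicit threshold I would assume $\Delta_i > \tau_i /\!\left(F_I(\tau_i)\rho\right)$ and chain two monotonicity steps. Note first that $F_I(\tau_i)\rho \le 1$, so the threshold is at least $\tau_i > 0$ and hence $\Delta_i > 0$; since $F_I$ is non-decreasing this gives $F_I(\Delta_i+\tau_i) \ge F_I(\tau_i)$, and trivially $\Delta_i+\tau_i > \Delta_i$. Combining,
\[
F_I(\Delta_i+\tau_i)\,(\Delta_i+\tau_i)\,\rho \;\ge\; F_I(\tau_i)\,\Delta_i\,\rho \;>\; F_I(\tau_i)\cdot \frac{\tau_i}{F_I(\tau_i)\,\rho}\cdot\rho \;=\; \tau_i,
\]
so the displayed inequality holds strictly and, by Lemma~\ref{thm:min_bid}(\textit{i}), delayed inclusion is (uniquely) a best response. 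Since the threshold is a finite number whenever $F_I(\tau_i)>0$, this simultaneously establishes the ``there exists'' claim.

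The only point needing care is the degenerate case $F_I(\tau_i)=0$, where the displayed threshold is vacuous. I would treat it separately: because $F_I$ is a continuous CDF with bounded support and $F_I\to 1$, the map $\Delta_i \mapsto F_I(\Delta_i+\tau_i)(\Delta_i+\tau_i)\rho$ tends to $+\infty$ as $\Delta_i\to\infty$, hence exceeds $\tau_i$ for $\Delta_i$ large enough, and Lemma~\ref{thm:min_bid}(\textit{i}) again yields delayed inclusion as a best response. I do not expect a genuine obstacle: the content is essentially a one-line corollary of the earlier lemma together with monotonicity of $F_I$, and the only things to get right are the direction of the inequalities and the edge case just noted.
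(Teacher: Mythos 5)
Your proof is correct and follows essentially the same route as the paper: both reduce the claim to Lemma~\ref{thm:min_bid}(\textit{i}) and use the monotonicity of $F_I$ to show that $\Delta_i > \tau_i/(F_I(\tau_i)\cdot\rho)$ forces the producer's expected payoff from delayed inclusion to exceed $\tau_i$. Your explicit treatment of the degenerate case $F_I(\tau_i)=0$ (and your use of the un-rearranged comparison, which avoids dividing by $1-F_I(\Delta_i+\tau_i)\rho$) is a minor tidying that the paper omits, but the substance of the argument is the same.
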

\begin{proof}
Assume that $\Delta_i > \frac{\tau_i}{F_I(\tau_i)\cdot\rho}$. Then we have:
$$\small
\Delta_i > \frac{\tau_i}{F_I(\tau_i)\cdot\rho}\geq\frac{\tau_i}{F_I(\Delta_i+\tau_i)\cdot\rho} \geq \tau_i\frac{1-F_I(\Delta_i+\tau_i)\cdot \rho}{F_I(\Delta_i+\tau_i)\cdot\rho}
$$
where the second inequality follows from the monotonicity of $F_I$, and the last inequality from $F_I(\Delta_i+\tau_i)\cdot \rho \geq 0$. Rearranging the above chain of inequalities, we get the following:

$$\small
\Delta_i \frac{F_I(\Delta_i+\tau_i)\cdot \rho}{1-F_I(\Delta_i+\tau_i)\cdot\rho} > \tau_i
$$

The claim now follows from Lemma \ref{thm:min_bid}.
\end{proof}

%
%
For the case when transaction $i$ is received from a private mempool, the following proposition proves that delayed inclusion is a best response for the producer for values of MEV that are at least the 50th percentile of $F_I$.

\begin{proposition} \textbf{(Private mempool)}
For any $F_I$ and any tip $\tau_i$, there exists a value of MEV $\Delta_i$ such that delayed inclusion is a best response for the producer. In particular, if $\Delta_i> \tau_i$ and $F_I(\Delta_i)>\frac{1}{2}$, then delayed inclusion is a best response for the producer.
\end{proposition}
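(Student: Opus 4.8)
The plan is to mirror the argument of the preceding \textbf{(Public mempool)} proposition, but with the private-mempool inclusion criterion from Lemma~\ref{thm:min_bid}(\textit{ii}) in place of part~(\textit{i}). First I would restate the relevant best-response characterization: when $i$ is drawn from a private mempool the producer's per-transaction utility is $x_i\tau_i + (1-x_i)F_I(\Delta_i+\tau_i)(\Delta_i+\tau_i)$ (no factor $\rho$, since $M$ is guaranteed to be able to include $i$ herself in some future block), so delayed inclusion is a best response precisely when $\tau_i \le F_I(\Delta_i+\tau_i)(\Delta_i+\tau_i)$ --- equivalently, when the strict inequality of Lemma~\ref{thm:min_bid}(\textit{ii}) fails. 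The task thus reduces to producing a $\Delta_i$ that forces this inequality.

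Next I would carry out the explicit estimate under the stated hypotheses $\Delta_i > \tau_i$ and $F_I(\Delta_i) > \frac{1}{2}$. Monotonicity of $F_I$ gives $F_I(\Delta_i+\tau_i) \ge F_I(\Delta_i) > \frac{1}{2}$, while $\Delta_i > \tau_i$ gives $\Delta_i + \tau_i > 2\tau_i$. Multiplying these two bounds yields $F_I(\Delta_i+\tau_i)(\Delta_i+\tau_i) > \frac{1}{2}\cdot 2\tau_i = \tau_i$, which is exactly the condition for delayed inclusion to be the producer's best response, completing the ``in particular'' part of the statement.

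Finally, for the bare existence claim I would invoke that $F_I$ is a continuous CDF with bounded support, hence $F_I(x) \to 1$; so there is a threshold $\Delta^\star$ with $F_I(x) > \frac{1}{2}$ for all $x \ge \Delta^\star$, and any choice $\Delta_i > \max\{\tau_i, \Delta^\star\}$ satisfies both hypotheses. I do not expect a real obstacle here: the result is essentially a one-line consequence of Lemma~\ref{thm:min_bid}(\textit{ii}) together with monotonicity of $F_I$. The only points requiring a little care are keeping the strict-versus-weak inequality in the definition of ``best response'' consistent with how Lemma~\ref{thm:min_bid} is phrased, and emphasizing that it is precisely the absence of the $\rho$ factor in the private-mempool case that lets the clean threshold $F_I(\Delta_i) > \frac{1}{2}$ suffice (in contrast to the $F_I(\tau_i)\cdot\rho$ term appearing in the public-mempool proposition).
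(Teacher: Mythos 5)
Your proof is correct and follows essentially the same route as the paper: both arguments reduce to the private-mempool criterion of Lemma~\ref{thm:min_bid}(\textit{ii}) (no $\rho$ factor) and use monotonicity of $F_I$ together with $\Delta_i>\tau_i$ and $F_I(\Delta_i)>\tfrac12$ to show the delayed-inclusion payoff exceeds the tip. Your direct bound $F_I(\Delta_i+\tau_i)(\Delta_i+\tau_i)>\tfrac12\cdot 2\tau_i$ is just a rearrangement of the paper's odds-ratio inequality $\tau_i<\Delta_i\frac{F_I(\Delta_i+\tau_i)}{1-F_I(\Delta_i+\tau_i)}$, and your extra remark on existence via $F_I(x)\to 1$ is a harmless addition.
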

\begin{proof}
\newcommand{\deltaT}{\Tilde{\Delta_i}}
Let us consider a value of $\Delta_i$ such that: $\Delta_i > \tau_i$ and $F_I(\Delta_i) > \frac{1}{2}$.
From the latter, it is easy to see that $\frac{F_I(\Delta_i+\tau_i)}{1-F_I(\Delta_i+\tau_i)}>1$, which, combined with the former, yields:

$$
\tau_i < \Delta_i\frac{F_I(\Delta_i+\tau_i)}{1-F_I(\Delta_i+\tau_i)}
$$
From Lemma \ref{thm:min_bid} it follows that the producer's best response is delayed inclusion.
\end{proof}

We will now argue that introducing our time-bound signature scheme can help reduce the effect of MEV on tips.
The strategy space of a bidder is now expanded to include, besides fee cap $b_i^c$ and tip $\tau_i$, also a time $t_e$ after which the transaction can no longer be included in a block.
This, of course, has an effect on the utility functions of both the producer and the bidders, and, as a consequence, changes the equilibria of the ensuing game.
The following proposition proves that if EIP-1559 is in a steady state, then the equilibrium strategy will be the following: (\textit{i})
bidder $i$ bids a fee cap $b^c_i=\bar{v}_i$, a tip $\tau_i=0$ and a time limit $t_e=t_c$, where $t_c$ is the current block height; the producers will include immediately all transactions $i$ such that $b_i^c\geq b_f$.

\begin{proposition}
If EIP-1559 is in a steady state, the equilibrium strategies are as follows.
Each bidder $i$ will bid $(\bar{v}_i,0)$ and set $t_e=t_i$, i.e., bidder $i$ will set her fee cap equal to her valuation, bid a negligible amount, and set the time limit to one block.
The producer will include all eligible transactions immediately.
\end{proposition}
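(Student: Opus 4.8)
The plan is to verify that the stated profile is a (subgame-perfect) Stackelberg equilibrium by checking two things: that immediate inclusion of every eligible transaction is a best response for the producer (the follower) against the announced bid vector, and that no bidder (leader) can profitably deviate in any of her three coordinates $(b_i^c,\tau_i,t_e)$ once the producer re-optimizes. The structural fact I would establish first is that setting $t_e=t_i$ removes delayed inclusion from the producer's feasible set for transaction $i$: any block produced at a height $>t_i$ fails the check $f_t(t_c,t_e)=1$, and by the malleability Claim and the EUF--CMA Theorem of Section~\ref{sec:security} the producer cannot forge a valid signature carrying a later expiry. Hence for such a transaction the producer's only options are to include it in the current block, earning $\tau_i$, or to drop it, earning $0$; in particular the relative-stake factor $\rho$, which only discounted a future producer's MEV, disappears.

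Given this, the producer's best response follows from \eqref{eq:miner_utility} restricted to the $s^*$ currently eligible transactions: since every announced tip is $\tau_i\geq 0$ (and a negligible $\tau_i>0$ makes the comparison strict), including all of them weakly dominates — strictly, for positive tips — any strategy that omits some, and the steady-state condition that blocks are of target size $s^*$ guarantees there is room for all of them. Thus ``include every eligible transaction immediately'' is a best response for the producer.

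For the bidders I would argue coordinate by coordinate, using that the producer is pinned to immediate inclusion whenever $t_e=t_i$ and $b_i^c\geq b_f$. \emph{(a) Tip.} With immediate inclusion, $u_i(\tau_i,\textbf{x})=v_i-\tau_i$ by \eqref{eq:bidder_utility}, which is strictly decreasing in $\tau_i$ and cannot be pushed below its value at $\tau_i=0$ since tips are nonnegative, so the negligible tip is optimal. \emph{(b) Fee cap.} With $\tau_i=0$, an included transaction pays $\min\{b_f,b_i^c\}=b_f$ for every $b_i^c\geq b_f$, so all such fee caps are payoff-equivalent and reporting $b_i^c=\bar v_i$ is the weakly dominant truthful report already identified for EIP-1559's posted-price behaviour; bidders with $\bar v_i<b_f$ obtain negative utility if included and therefore drop out, which is exactly why the producer ``includes all eligible transactions''. \emph{(c) Expiry.} Deviating to some $t'_e<t_i$ makes the signature invalid everywhere and yields utility $0$; deviating to $t'_e>t_i$ re-enables delayed inclusion, and then Lemma~\ref{thm:min_bid} shows the producer prefers to delay whenever $\tau_i\leq\Delta_i\frac{F_I(\Delta_i+\tau_i)\rho}{1-F_I(\Delta_i+\tau_i)\rho}$ — in particular whenever $\tau_i=0$ — reducing $i$'s utility from $v_i-\tau_i$ to $F_I(\tau_i+\Delta_i)(v_i-\tau_i)\leq v_i-\tau_i$, with strict inequality whenever $F_I(\tau_i+\Delta_i)<1$; a simultaneous increase of $\tau_i$ to buy back immediate inclusion only reintroduces a positive tip and is therefore also worse. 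Hence $t_e=t_i$ is optimal and no unilateral deviation in any combination of coordinates helps.

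Combining the two sides yields the claimed equilibrium. The main obstacle I expect is the knife-edge at $\tau_i=0$: the producer is then exactly indifferent between including and dropping a transaction, so one must either invoke the standard tie-breaking convention that the follower breaks ties in favour of inclusion or — as the proposition's phrasing ``a negligible amount'' already suggests — carry an infinitesimal $\tau_i>0$ through the argument and pass to the limit. A secondary point needing care is the degenerate case $F_I(\tau_i+\Delta_i)=1$, in which extending $t_e$ is only weakly rather than strictly harmful to the bidder; there the stated profile is still an equilibrium because weak best responses suffice, and strictness is recovered under the (empirically supported) assumption that some extractable MEV remains at the relevant horizon.
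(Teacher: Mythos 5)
Your proposal is correct and follows essentially the same route as the paper's (much terser) proof: setting $t_e=t_i$ removes delayed inclusion from the producer's options so immediate inclusion is her best response, bidders weakly prefer immediate inclusion by \eqref{eq:bidder_utility}, and the truthful fee cap with negligible tip is inherited from the standard EIP-1559 steady-state analysis (the paper simply cites \cite{tfmd} for that part). Your additional care about the tie at $\tau_i=0$ and the explicit deviation check on $t_e$ via Lemma~\ref{thm:min_bid} only spells out what the paper leaves implicit, rather than constituting a different approach.
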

\begin{proof}
From equation \ref{eq:bidder_utility} it is clear that a bidder (weakly) prefers immediate inclusion to delayed inclusion, even for high values of $\Delta_i$.
On the other hand, if transaction $i$ sets $t_e=t_i$, then it is a best response for the producer to include $i$ immediately and collect the tip $\tau_i$, as it will not be possible to include it in later blocks.
This implies that a bidder's dominant strategy in this case is to always set $t_e=t_i$.
The fact that $b^c_i=\bar{v}_i$, a tip $\tau_i=0$ follows from \cite{tfmd}, as this is not affected by the introduction of the time limit.
\end{proof}

By introducing time-bound signatures, we have recouped the predicted steady-state equilibrium for patient transactions.


\section{Conclusion}\label{sec:conclusions}
We have introduced a simple yet effective modification to the Schnorr signature scheme that enforces a block‐height–based expiry, leveraging the blockchain itself as a tamper‐resistant clock. Our time‐bound signatures retain the same efficiency and security guarantees as vanilla Schnorr under the discrete logarithm assumption, while sharply reducing non‐myopic MEV opportunities by forcing immediate inclusion or forfeiture. Through security proofs in the AGM, and a Stackelberg game analysis of EIP‐1559, we demonstrated that time‐bound signatures recover the predicted steady‐state equilibrium of negligible tips. This work opens the door to practical, low‐overhead MEV countermeasures deployable on existing blockchains, and suggests further exploration of time‐bound primitives in other auction and consensus settings. Said signature scheme can be trivially applied to other blockchains and TFMs as well as adapted to allow for smart contract timeouts and escrow protocols.
\bibliographystyle{IEEEtran}
\bibliography{bib}

\begin{thebibliography}{10}
\providecommand{\url}[1]{#1}
\csname url@samestyle\endcsname
\providecommand{\newblock}{\relax}
\providecommand{\bibinfo}[2]{#2}
\providecommand{\BIBentrySTDinterwordspacing}{\spaceskip=0pt\relax}
\providecommand{\BIBentryALTinterwordstretchfactor}{4}
\providecommand{\BIBentryALTinterwordspacing}{\spaceskip=\fontdimen2\font plus
\BIBentryALTinterwordstretchfactor\fontdimen3\font minus \fontdimen4\font\relax}
\providecommand{\BIBforeignlanguage}[2]{{%
\expandafter\ifx\csname l@#1\endcsname\relax
\typeout{** WARNING: IEEEtran.bst: No hyphenation pattern has been}%
\typeout{** loaded for the language `#1'. Using the pattern for}%
\typeout{** the default language instead.}%
\else
\language=\csname l@#1\endcsname
\fi
#2}}
\providecommand{\BIBdecl}{\relax}
\BIBdecl

\bibitem{fdn}
H.~Chung and E.~Shi, ``Foundations of transaction fee mechanism design,'' in \emph{Proceedings of the 2023 Annual ACM-SIAM Symposium on Discrete Algorithms (SODA)}, 2023, pp. 3856--3899.

\bibitem{tfmd}
T.~Roughgarden, ``Transaction fee mechanism design,'' \emph{SIGecom Exch.}, vol.~19, no.~1, p. 52–55, jul 2021.

\bibitem{10.1145/3689931.3694911}
\BIBentryALTinterwordspacing
S.~Yang, F.~Zhang, K.~Huang, X.~Chen, Y.~Yang, and F.~Zhu, ``Sok: Mev countermeasures,'' in \emph{Proceedings of the Workshop on Decentralized Finance and Security}, ser. DeFi '24.\hskip 1em plus 0.5em minus 0.4em\relax New York, NY, USA: Association for Computing Machinery, 2024, p. 21–30. [Online]. Available: \url{https://doi.org/10.1145/3689931.3694911}
\BIBentrySTDinterwordspacing

\bibitem{cryptoeprint:2019/265}
\BIBentryALTinterwordspacing
R.~Khalil, A.~Gervais, and G.~Felley, ``{TEX} - a securely scalable trustless exchange,'' Cryptology {ePrint} Archive, Paper 2019/265, 2019. [Online]. Available: \url{https://eprint.iacr.org/2019/265}
\BIBentrySTDinterwordspacing

\bibitem{cryptoeprint:2024/1533}
\BIBentryALTinterwordspacing
J.~Bormet, S.~Faust, H.~Othman, and Z.~Qu, ``{BEAT}-{MEV}: Epochless approach to batched threshold encryption for {MEV} prevention,'' Cryptology {ePrint} Archive, Paper 2024/1533, 2024. [Online]. Available: \url{https://eprint.iacr.org/2024/1533}
\BIBentrySTDinterwordspacing

\bibitem{flashprotect}
\BIBentryALTinterwordspacing
Flashbots, ``Mev protection overview,'' 2025. [Online]. Available: \url{https://docs.flashbots.net/flashbots-protect/overview}
\BIBentrySTDinterwordspacing

\bibitem{antonopoulos2023mastering}
A.~M. Antonopoulos and D.~A. Harding, \emph{Mastering Bitcoin: Programming the open blockchain}.\hskip 1em plus 0.5em minus 0.4em\relax " O'Reilly Media, Inc.", 2023.

\bibitem{fudenberg1991game}
D.~Fudenberg and J.~Tirole, \emph{Game theory}.\hskip 1em plus 0.5em minus 0.4em\relax MIT press, 1991.

\bibitem{9152675}
P.~Daian, S.~Goldfeder, T.~Kell, Y.~Li, X.~Zhao, I.~Bentov, L.~Breidenbach, and A.~Juels, ``Flash boys 2.0: Frontrunning in decentralized exchanges, miner extractable value, and consensus instability,'' in \emph{2020 IEEE Symposium on Security and Privacy (SP)}, 2020, pp. 910--927.

\bibitem{10179346}
K.~Babel, P.~Daian, M.~Kelkar, and A.~Juels, ``{Clockwork Finance: Automated Analysis of Economic Security in Smart Contracts},'' in \emph{2023 IEEE Symposium on Security and Privacy (SP)}.\hskip 1em plus 0.5em minus 0.4em\relax Los Alamitos, CA, USA: IEEE Computer Society, May 2023, pp. 2499--2516.

\bibitem{zhang2024computation}
M.~Zhang, Y.~Li, X.~Sun, E.~Chen, and X.~Chen, ``Computation of optimal mev in decentralized exchanges,'' Technical report, Working paper-https://mengqian-zhang. github. io/papers~…, Tech. Rep., 2024.

\bibitem{burian2024futuremev}
\BIBentryALTinterwordspacing
J.~Burian, ``The future of mev,'' 2024. [Online]. Available: \url{https://arxiv.org/abs/2404.04262}
\BIBentrySTDinterwordspacing

\bibitem{lamport}
L.~Lamport, ``Time, clocks, and the ordering of events in a distributed system,'' \emph{Commun. ACM}, vol.~21, no.~7, p. 558–565, Jul. 1978.

\bibitem{btc}
\BIBentryALTinterwordspacing
S.~Nakamoto, ``Bitcoin: A peer-to-peer electronic cash system.'' 2008. [Online]. Available: \url{https://bitcoin.org/bitcoin.pdf}
\BIBentrySTDinterwordspacing

\bibitem{abadi2018blockchain}
J.~Abadi and M.~Brunnermeier, ``Blockchain economics,'' National Bureau of Economic Research, Tech. Rep., 2018.

\bibitem{frost}
C.~Komlo and I.~Goldberg, ``Frost: Flexible round-optimized schnorr threshold signatures,'' in \emph{Selected Areas in Cryptography}, O.~Dunkelman, M.~J. Jacobson, Jr., and C.~O'Flynn, Eds.\hskip 1em plus 0.5em minus 0.4em\relax Cham: Springer International Publishing, 2021, pp. 34--65.

\bibitem{shortlived}
A.~Arun, J.~Bonneau, and J.~Clark, ``Short-lived zero-knowledge proofs and signatures,'' in \emph{Advances in Cryptology -- ASIACRYPT 2022}, S.~Agrawal and D.~Lin, Eds.\hskip 1em plus 0.5em minus 0.4em\relax Cham: Springer Nature Switzerland, 2022, pp. 487--516.

\bibitem{timelock}
R.~L. Rivest, A.~Shamir, and D.~A. Wagner, ``Time-lock puzzles and timed-release crypto,'' Massachusetts Institute of Technology, USA, Tech. Rep., 1996.

\bibitem{tide}
\BIBentryALTinterwordspacing
A.~F. Loe, L.~Medley, C.~O’Connell, and E.~A. Quaglia, ``{TIDE}: A novel approach to constructing timed-release encryption,'' Cryptology ePrint Archive, Paper 2021/1293, 2021, \url{https://eprint.iacr.org/2021/1293}. [Online]. Available: \url{https://eprint.iacr.org/2021/1293}
\BIBentrySTDinterwordspacing

\bibitem{cryptoeprint:2020/269}
\BIBentryALTinterwordspacing
M.~Kelkar, F.~Zhang, S.~Goldfeder, and A.~Juels, ``Order-fairness for byzantine consensus,'' Cryptology {ePrint} Archive, Paper 2020/269, 2020. [Online]. Available: \url{https://eprint.iacr.org/2020/269}
\BIBentrySTDinterwordspacing

\bibitem{kavousi2023blindperm}
A.~Kavousi, D.~V. Le, P.~Jovanovic, and G.~Danezis, ``Blindperm: Efficient mev mitigation with an encrypted mempool and permutation,'' \emph{Cryptology ePrint Archive}, 2023.

\bibitem{budish2015high}
E.~Budish, P.~Cramton, and J.~Shim, ``The high-frequency trading arms race: Frequent batch auctions as a market design response,'' \emph{The Quarterly Journal of Economics}, vol. 130, no.~4, pp. 1547--1621, 2015.

\bibitem{Schnorr}
\BIBentryALTinterwordspacing
C.~Schnorr, ``Method for identifying subscribers and for generating and verifying electronic signatures in a data exchange system,'' February 1991. [Online]. Available: \url{https://www.freepatentsonline.com/4995082.html}
\BIBentrySTDinterwordspacing

\bibitem{FS87}
A.~Fiat and A.~Shamir, ``How to prove yourself: Practical solutions to identification and signature problems,'' in \emph{Advances in Cryptology --- CRYPTO' 86}, A.~M. Odlyzko, Ed.\hskip 1em plus 0.5em minus 0.4em\relax Berlin, Heidelberg: Springer Berlin Heidelberg, 1987, pp. 186--194.

\bibitem{PS96}
D.~Pointcheval and J.~Stern, ``Security proofs for signature schemes,'' in \emph{Advances in Cryptology --- EUROCRYPT '96}, U.~Maurer, Ed.\hskip 1em plus 0.5em minus 0.4em\relax Berlin, Heidelberg: Springer Berlin Heidelberg, 1996, pp. 387--398.

\bibitem{boneh2020graduate}
D.~Boneh and V.~Shoup, ``A graduate course in applied cryptography,'' \emph{Draft 0.5}, 2020.

\bibitem{agm}
\BIBentryALTinterwordspacing
G.~Fuchsbauer, E.~Kiltz, and J.~Loss, ``The algebraic group model and its applications,'' Cryptology ePrint Archive, Paper 2017/620, 2017. [Online]. Available: \url{https://eprint.iacr.org/2017/620}
\BIBentrySTDinterwordspacing

\bibitem{cryptoeprint:2012/029}
\BIBentryALTinterwordspacing
Y.~Seurin, ``On the exact security of schnorr-type signatures in the random oracle model,'' Cryptology {ePrint} Archive, Paper 2012/029, 2012. [Online]. Available: \url{https://eprint.iacr.org/2012/029}
\BIBentrySTDinterwordspacing

\bibitem{AzouviGHH23}
\BIBentryALTinterwordspacing
S.~Azouvi, G.~Goren, L.~Heimbach, and A.~Hicks, ``Base fee manipulation in ethereum's {EIP-1559} transaction fee mechanism,'' in \emph{{DISC} 2023}, ser. LIPIcs, R.~Oshman, Ed., vol. 281.\hskip 1em plus 0.5em minus 0.4em\relax Schloss Dagstuhl - Leibniz-Zentrum f{\"{u}}r Informatik, 2023, pp. 6:1--6:22. [Online]. Available: \url{https://doi.org/10.4230/LIPIcs.DISC.2023.6}
\BIBentrySTDinterwordspacing

\bibitem{schwarz_schilling2023timing}
\BIBentryALTinterwordspacing
C.~Schwarz‐Schilling, M.~Nueder, B.~Monnot, Francesco \emph{et~al.} (2023, Dec.~5) Timing games: Implications and possible mitigations. ETHResearch blog post. [Online]. Available: \url{https://ethresear.ch/t/timing-games-implications-and-possible-mitigations/17612}
\BIBentrySTDinterwordspacing

\end{thebibliography}
\end{document}